\documentclass[10pt]{article}

\usepackage[margin=1in]{geometry}
\usepackage{natbib}
\usepackage{parskip}

\usepackage{algorithm, algorithmicx, algpseudocode}
\usepackage{amssymb}
\usepackage{amsmath}
\usepackage{graphicx}
\usepackage{mathtools}
\usepackage{needspace} 
\usepackage{multicol}
\usepackage{listings}
\usepackage{amsthm}
\usepackage[shortlabels]{enumitem}
\usepackage[italicdiff]{physics}
\usepackage{cancel}
\usepackage[dvipsnames]{xcolor}
\usepackage[table]{xcolor}
\usepackage{verbatim}
\usepackage{comment}
\usepackage{array}
\usepackage{tikz-cd}
\usepackage{import}
\usepackage{booktabs}
\usepackage{comment}
\usepackage{array}
\usepackage[colorlinks, linkcolor=blue, citecolor=ForestGreen]{hyperref}
\usepackage{forest}
\usepackage{float}
\usepackage{xspace}
\usepackage{aliascnt}
\usepackage{wrapfig}
\usepackage{adjustbox}
\usepackage{bm}
\usepackage{bbm}
\usepackage{complexity}
\usepackage{makecell}
\usepackage{nicefrac}
\usepackage{enumitem}
\usepackage{comment}
\usepackage{amsthm}
\usepackage{bbm}
\usepackage{physics}
\usepackage{amssymb}
\usepackage{booktabs}
\usepackage{diagbox}
\usepackage{mathtools}

\usepackage{amsmath}
\usetikzlibrary{arrows.meta,decorations.pathreplacing}
\usetikzlibrary{calc, positioning, external, arrows.meta}

\allowdisplaybreaks

\DeclareMathOperator*{\argmax}{argmax}

\let\cite\citep

\newfloat{protocol}{tbp}{lop}
\floatname{protocol}{Protocol}

\usepackage[most]{tcolorbox}
\tcbuselibrary{listingsutf8}

\tcbset{
  myblockstyle/.style={
    enhanced,
    boxrule=0.4pt,
    colback=#1!10,
    colframe=#1!80!black,
    boxsep=2pt,
    left=-2.5pt,
    right=2pt,
    top=0pt,
    bottom=0pt,
    before skip=2pt,
    after skip=2pt,
  }
}

\newtcbox{\tcbinlinebox}[1][]{nobeforeafter,
  on line,
  colback=#1!10,
  colframe=#1!80!black,
  boxrule=0.8pt,
  arc=2pt,
  boxsep=0pt,
  left=2pt, right=2pt, top=2pt, bottom=2pt,
  enhanced,
  valign=center
}
\usepackage[most]{tcolorbox}

\definecolor{partialteal}{RGB}{82,110,150}
\definecolor{fullpurple}{RGB}{150,95,110}

\newtheorem{theorem}{Theorem}
\numberwithin{theorem}{section}
\newtheorem*{theorem*}{Theorem}

\newtheorem{lemma}[theorem]{Lemma}

\theoremstyle{definition}

\usepackage{thm-restate}

\newcommand{\ind}[1]{\mathbbm{1}\qty{#1}}

\algnewcommand\algorithmicinput{\textbf{Input:}}
\algnewcommand\Input{\item[\algorithmicinput]}

\newenvironment{protocolalgorithm}
  {\floatname{algorithm}{Protocol}\begin{algorithm}}
  {\end{algorithm}\floatname{algorithm}{Algorithm}}

\def\gG{{\mathcal{G}}}
\def\gH{{\mathcal{H}}}

\def\gO{{\mathcal{O}}}

\def\gU{{\mathcal{U}}}

\usepackage{amsmath}
\usetikzlibrary{arrows.meta,decorations.pathreplacing}

\allowdisplaybreaks

\usepackage{pifont}

\newcommand{\xmark}{\ding{55}}

\makeatletter
\renewcommand\thanks[1]{%
  \footnotemark%
  \protected@xdef\@thanks{\@thanks
    \protect\footnotetext[\the\c@footnote]{#1}}%
}
\makeatother

\title{Optimal Regret for Online Market Making \\ with Limit Order Book}
\author{
    Maria Elena Vischi\\
    \texttt{mariaelena.vischi@polimi.it} \\
    Politecnico di Milano 
\and
    Francesco Emanuele Stradi \\\texttt{francescoemanuele.stradi@polimi.it} \\
    Politecnico di Milano 
\and 
    Alberto Marchesi\\
    \texttt{alberto.marchesi@polimi.it} \\
    Politecnico di Milano 
}
\date{\today}

\begin{document}

\maketitle

\begin{abstract}
We study online learning in  market making, where, at each round, a market maker posts bid and ask prices before observing the market price and the private valuation of an incoming trader. In this setting, \citet{Maran2026OnlineMM} introduce a feedback model motivated by limit order books, in which the trader’s valuation is revealed only if no transaction occurs. Assuming that trader valuations are drawn i.i.d.\ from an unknown distribution while market prices are chosen adversarially, they establish an expected regret bound of $\widetilde{\mathcal{O}}(T^{2/3})$. In this work, we improve upon this guarantee by establishing a high-probability regret bound of $\widetilde{\mathcal{O}}(\sqrt{T})$. As a warm-up, we first consider the full-feedback setting. We introduce a discretization of the bid--ask space based on two coupled grids and combine it with Hedge to achieve the desired regret rate. Building on these ideas, we then address the substantially weaker feedback induced by a limit order book and develop an algorithm that achieves the same guarantee. Finally, we investigate the limits of learnability in fully adversarial environments, where the valuations may vary arbitrarily as well. Perhaps surprisingly, we show that when both market prices and trader valuations are chosen adversarially, sublinear regret is impossible even under full feedback, thereby motivating our stochastic assumption on the valuations.
\end{abstract}

\section{Introduction}

Market makers provide liquidity to financial markets by posting prices at which they are willing to buy and sell an asset~\citep{amihud1986asset}. They aim to profit from the bid--ask spread while managing the risks of price changes and trading with better-informed participants, a central theme in the economics literature on market making. Today, market making is a core activity of quantitative trading firms such as Jane Street and Citadel Securities, which use mathematical models and automated trading systems to provide liquidity across financial markets \citep{BIAIS2005217}. Designing effective strategies for choosing and updating these quotes is therefore an important problem in algorithmic trading~\citep{glosten1988estimating,madhavan2000market}.
A central challenge in designing market-making strategies is that quotes must be chosen under uncertainty about both the market price of the asset and the willingness of incoming traders to transact. Moreover, the available information depends on the quotes themselves, linking current trading decisions to future learning opportunities. Online learning provides a framework for studying this interaction through a sequential decision-making problem. At each round, a market maker posts a bid and an ask before observing the market price and the incoming trader's private valuation. Thus, the performance is measured by regret against the best fixed bid--ask pair in hindsight.

\cite{CesaBianchi2024MarketMW} study this problem under two-bit feedback, where the learner observes the market price and whether the trader buys, sells, or does not trade, while the trader's valuation remains hidden. They establish tight regret guarantees under several assumptions on market prices and valuations. In particular, the minimax regret is of order $T^{2/3}$ when market prices and trader valuations are i.i.d.\ and independent of each other. Their lower bound holds even under Lipschitz assumptions on the underlying distributions, showing that the difficulty persists in regular stochastic environments. They also establish impossibility results when suitable assumptions on regularity and independence are absent. These results highlight the role of the information available to the market maker in determining learnability.

More recently, \citet{Maran2026OnlineMM} introduce a richer feedback model motivated by limit order books. In this model, the trader's valuation remains hidden when a transaction occurs but is revealed when no transaction takes place. Thus, the posted quotes determine both the reward and the information available for future decisions. Under i.i.d.\ trader valuations, they obtain high-probability regret bounds of $\widetilde{\mathcal{O}}(\sqrt{T})$ for independent stochastic market prices and extend these guarantees to certain mean-reverting price processes, without imposing smoothness on the valuation distribution. Differently, when the market prices are chosen by an oblivious adversary, while the valuations remain stochastic and stationary, they establish an expected regret bound of $\widetilde{\mathcal{O}}(T^{2/3})$. 

This leaves a gap between the guarantees available for stochastic and adversarial market prices under limit order book feedback. Is the slower rate an unavoidable consequence of arbitrary market prices, or can the information revealed by the order book still support a $\widetilde{\mathcal{O}}(\sqrt{T})$ guarantee? Moreover, both guarantees rely on the assumption that trader valuations are drawn from a fixed distribution. It is therefore natural to ask whether learning remains possible when this assumption is removed. Specifically, in this work we address the following two research questions:

\begin{quote}
\centering
\emph{$(i)$ Can we achieve $\widetilde{\mathcal{O}}(\sqrt{T})$ regret under limit order book feedback with adversarial market prices? \\
$(ii)$ Can sublinear regret also be achieved when both market prices and trader valuations are adversarial?}
\end{quote}

We answer the first question affirmatively and the second negatively. When trader valuations are drawn i.i.d.\ from an unknown distribution, we establish a high-probability regret bound of $\widetilde{\mathcal{O}}(\sqrt{T})$ while allowing market prices to remain arbitrary. Finally, we show that sublinear regret is impossible in fully adversarial environments, even under full feedback.

\subsection{Original Contributions}

We study online market making under different assumptions on the environment and the feedback available to the learner. Our main focus is the setting in which market prices are chosen by an adaptive adversary, while private valuations are drawn from an arbitrary distribution. We establish optimal regret rates under both full feedback and limit order book feedback, and show that learning becomes impossible when private valuations are also chosen adversarially. Our contributions can be summarized as follows:
\begin{itemize}
    \item \textbf{Optimal regret under full feedback.}
    As a warm-up, we establish a $\widetilde{\gO}(\sqrt{T})$ regret bound under full feedback. Our algorithm combines two Hedge instances, one for bid prices and one for ask prices. A key observation is that swapping the sampled quotes whenever they violate the feasibility constraint never decreases utility, allowing us to analyze the two instances separately and transfer their guarantees to the actions actually played.
    \item \textbf{Optimal high-probability regret under limit order book.}
    For the feedback model introduced by~\citet{Maran2026OnlineMM}, we establish a $\widetilde{\gO}(\sqrt{T})$ regret bound holding with high probability, even against an adaptive adversary. This improves upon the previous $\widetilde{\gO}(T^{2/3})$ bound, which held in expectation and assumed an oblivious adversary. Our analysis exploits the feedback structure. Specifically, each round allows the learner to reconstruct losses for the interval between the sampled quotes. Using this additional information, we achieve the same regret rate as under full feedback.

    \item \textbf{Impossibility in a fully adversarial environment.}
    When both market prices and private valuations are chosen adversarially, we prove a linear regret lower bound. This impossibility result holds even under full feedback, showing that observing all parameters at the end of each round is insufficient to guarantee sublinear regret.
\end{itemize}

The $\tilde{\gO}(\sqrt{T})$ upper bounds are optimal up to logarithmic factors, matching the standard $\Omega(\sqrt{T})$ lower bound for the multi-armed bandit problem~\citep{lattimore2020bandit}. Together, our results clarify how the assumptions on the environment and the available feedback affect the learnability of market making. 

Table~\ref{table-comparison} summarizes our guarantees and compares them with previous work.

\begin{table}[t]
	\caption{Comparison between our results and the state-of-the-art.}
	\label{table-comparison}
	\centering
	\small
	{\renewcommand{\arraystretch}{1.25}
		\setlength{\tabcolsep}{8pt}
		\begin{tabular}{c c | c c}
			\toprule
            \textbf{Feedback}
			& \textbf{$(M_t,V_t)$}
			&\citep{Maran2026OnlineMM}
            &\textbf{Our Work}\\
			\midrule
			Full
			& \textbf{(adv)+(stoc)}
			& \xmark
			& $\widetilde{\mathcal{O}}(\sqrt{T})$\\
			\addlinespace[3pt]
			& \textbf{(adv)+(adv)}
			&\xmark
			& $\Omega(T)$ \\
			\midrule
            LOB
			& \textbf{(adv)+(stoc)}
			&  $\widetilde{\mathcal{O}}(T^{2/3})$
			& $\widetilde{\mathcal{O}}(\sqrt{T})$ \\
			\bottomrule
	\end{tabular}}
\end{table}

\subsection{Related Works}

Market making has long been studied in market microstructure \citep{bouchaud2008marketsslowlydigestchanges}, with traditional approaches often formulating the choice of bid and ask prices as a stochastic control problem \citep{guéant2017optimalmarketmaking}. An alternative approach treats market making as an online learning problem, following the work of \citet{NIPS2013_995e1fda} and its more recent developments by \citet{CesaBianchi2024MarketMW} and \citet{Maran2026OnlineMM}.

Within this line of research, \citet{CesaBianchi2024MarketMW} study two-bit feedback about private valuation, where the learner observes only whether a purchase and a sale occur, and derive regret guarantees under regularity assumptions on the distributions describing the environment. Although these assumptions support the analysis, they may exclude behavior concentrated around particular prices. For example, the tendency of bidders to bid round numbers documented by \citet{NBERw34564} motivates considering distributions that need not be smooth. 

Alongside the assumptions on the environment, the information available to the learner is another key aspect of the problem. Since two-bit feedback may omit information available in a limit order book, \citet{Maran2026OnlineMM} introduce a richer feedback model, for which they establish an expected regret bound of $\widetilde{\gO}(T^{2/3})$ against an oblivious adversary. Their guarantees are limited to expected regret in the adversarial-price case and do not cover the fully adversarial setting in which both market prices and private valuations are chosen adversarially.

More broadly, market making is related to other online learning problems in economics, such as dynamic pricing \citep{946243.946352}, first-price auctions \citep{doi:10.1137/24M1712308,doi:10.1287/moor.6.1.58} and bilateral trade \citep{pmlr-v195-cesa-bianchi23a,MYERSON1983265}, where the feedback makes it difficult to infer how rewards vary across actions.

\section{Preliminaries}
We study online market making over $T\in\mathbb{N}_{\geq 0}$ rounds. 
At each round $t \in \{1,\dots,T\}$, the learner (\emph{i.e.}, the market maker) posts a bid quote $B_t$ and an ask quote $A_t$, where $A_t \geq B_t$.\footnote{This is a standard no-arbitrage condition. Otherwise, one could buy at the ask and immediately sell at the bid for a guaranteed profit.}
We normalize prices to the interval $[0,1]$, which can always be done through an affine transformation of the original price space. Accordingly, the feasible decision set is:
\[
\gU = \{(a,b) \in [0,1]^2 : a \geq b\}.
\]
At the beginning of round $t \in \{1,\dots,T\}$, a trader with private valuation $V_t \in [0,1]$ arrives.
The trader buys the asset from the market maker whenever $V_t > A_t$, and sells the asset to the market maker whenever $V_t \leq B_t$.
If $V_t \in (B_t,A_t]$, no trade occurs.
We assume that, whenever a trade takes place, the market maker immediately offloads its position on a perfectly liquid secondary market at price $M_t \in [0,1]$, which is observed at the end of the round only, incurring no additional transaction costs or inventory constraints.
The resulting utility function of the market maker is:
\[
u(a,b,v,m) =(a-m)\mathbb{I}\{v>a\}+(m-b)\mathbb{I}\{v\leq b\}.
\]
Throughout the paper, we will use the shorthand notation of $u_t(a,b)=u(a,b,V_t,M_t)$.

The online market making interaction is summarized in Protocol~\ref{alg:protocol_MM_Fullfeed}.
We consider two possible types of feedback: \emph{full} feedback (Line~\ref{full_feedback_protocolline}) and \emph{limit order book} feedback (Lines~\ref{partial_feedback_protocolline4}--\ref{partial_feedback_protocolline7}).
In both cases, the learner observes the market price $M_t$ at the end of each round. The realized trader valuation $V_t$ is always observed at the end of each round under full feedback, while it is observed only if no trade occurs in the limit order book case. 

\begin{protocolalgorithm}[!htp]
\caption{Online Market Making}
\label{alg:protocol_MM_Fullfeed}

\begin{algorithmic}[1]

  \For{$t = 1, \dots , T $}

    \State A new trader with valuation $V_t$ arrives

    \State The learner posts ask and bid prices
    $(A_t, B_t) \in \gU$

    \algstore{mmstate}

\end{algorithmic}

\begin{tcolorbox}[
    enhanced,
    colback=partialteal!4,
    colframe=partialteal,
    boxrule=1.0pt,
    arc=3pt,
    left=0pt,
    right=0pt,
    top=3pt,
    bottom=2pt,
    before skip=3pt,
    after skip=4pt,
    attach boxed title to top right={
        xshift=-0pt,
        yshift=-15pt
    },
    boxed title style={
        colback=white,
        colframe=partialteal,
        boxrule=0.8pt,
        arc=2pt,
        left=4pt,
        right=4pt,
        top=1pt,
        bottom=1pt
    },
    coltitle=partialteal,
    fonttitle={\scriptsize\bfseries},
    title={Limit order book}
]

\begin{algorithmic}[1]

    \algrestore{mmstate}

    \State $M_t \in [0,1]$ is observed
    \label{partial_feedback_protocolline4}

    \If{$B_t < V_t \leq A_t$}

        \State $V_t \in [0,1]$ is observed

    \Else

        \State The learner observes
        $\ind{B_t \geq V_t}$ and
        $\ind{A_t < V_t}$

    \EndIf
    \label{partial_feedback_protocolline7}

    \algstore{mmstate}

\end{algorithmic}

\end{tcolorbox}

\begin{tcolorbox}[
    enhanced,
    colback=fullpurple!4,
    colframe=fullpurple,
    boxrule=1.0pt,
    arc=3pt,
    left=0pt,
    right=0pt,
    top=3pt,
    bottom=2pt,
    before skip=3pt,
    after skip=4pt,
    attach boxed title to top right={
        xshift=-0pt,
        yshift=-15pt
    },
    boxed title style={
        colback=white,
        colframe=fullpurple,
        boxrule=0.8pt,
        arc=2pt,
        left=4pt,
        right=4pt,
        top=1pt,
        bottom=1pt
    },
    coltitle=fullpurple,
    fonttitle=\scriptsize\bfseries,
    title={Full feedback}
]

\begin{algorithmic}[1]

    \algrestore{mmstate}

    \State $(V_t, M_t) \in [0,1]^2$ is observed
    \label{full_feedback_protocolline}

    \algstore{mmstate}

\end{algorithmic}

\end{tcolorbox}

\begin{algorithmic}[1]

    \algrestore{mmstate}

    \State The learner gains utility $u_t(A_t, B_t)$

  \EndFor

\end{algorithmic}

\end{protocolalgorithm}

We consider two different models for the sequence $\{(V_t,M_t)\}_{t=1}^T$: 

\begin{itemize}[leftmargin=65pt, labelsep=0.6em]
    \item[\textbf{(adv) + (stoc)}]
    $\{V_t\}_{t=1}^T$ is an i.i.d.\ sequence, while $\{M_t\}_{t=1}^T$ is chosen by an adaptive adversary.

    \item[\textbf{(adv) + (adv)}]
    Fully adversarial setting, in which the sequence $\{(V_t,M_t)\}_{t=1}^T$ is generated by an oblivious adversary.
\end{itemize}

The first case \textbf{(adv) + (stoc)} is the main focus of this work, while we provide a lower bound, which naturally extends to the case of an adaptive adversary, for the second case \textbf{(adv) + (adv)}.

Let us remark that we make no assumption on the regularity of the sequence of $\{V_t\}_{t=1}^T$. Indeed, differently from~\citet{CesaBianchi2024MarketMW}, we do \emph{not} assume that the c.d.f.\ of the trader valuations distribution is Lipschitz--continuous in any part of the work. 

\subsection{Performance Metric} 
We use as performance metric the notion of \emph{cumulative regret}, which is defined as follows: 
\[
R_T = \sup_{(a,b)\in \gU} \mathbb E_V \left[ \sum_{t=1}^T  \left( u_t(a,b)- u_t(A_t,B_t) \right)\right] ,
\]
where the expectation is taken over the randomization of the environment. 
We define the c.d.f.\ for the random variable $V$ underlying the trader valuations in the \textbf{(adv)+(stoc)} as $F(x)=\mathbb P (V\leq x)$ and, accordingly, we define the expected utility as: \[U_t(a,b) \coloneqq \mathbb E_V [u_t(a,b)] =(M_t-b)F(b) + (a-M_t)(1-F(a)).\]
We notice that the utility can be decomposed into two components, one for the bid and one for the ask price. 
Thus, we denote: \[u_t^A(a)\coloneqq(a-M_t)\ind{V_t > a}, \quad \quad \quad  u_t^B(b)\coloneqq(M_t-b)\ind{V_t \leq b},\] so that $u_t(a,b) = u_t^B(b) + u_t^A(a)$.
Defining $\mu_T = \frac{1}{T}\sum_{t=1}^T M_t$, we also notice that: \[\sup_{(a,b) \in \gU} \frac{1}{T} \sum_{t=1}^T U_t(a,b) =\sup_{(a,b) \in \gU} \bigg[ (\mu_T - b)F(b) + (a -\mu_T)(1-F(a))\bigg].\]
Finally, defining $U(a,b)\coloneqq (\mu_T - b)F(b) + (a -\mu_T)(1-F(a))$, we can rewrite the regret as: 
\begin{align}\label{eq:meanregret_def}
    R_T = \sup_{(a,b) \in \gU} \sum_{t=1}^T \bigg[U_t(a,b) - U_t(A_t,B_t) \bigg]= T \cdot \sup_{(a,b) \in \gU}  U(a,b) - \sum_{t=1}^T U_t(A_t, B_t).
\end{align}
Notice that, in case of adversarial trader valuations, where no statistical assumption is made on the sequence $\{(M_t, V_t)\}_{t=1}^T$, no expectation is taken in the regret definition.
 
In the next section, we start with assuming full feedback and stochastic trader valuations. 

\section{Warm up: The Full Feedback Case}\label{sec:advstoc_fullfeed}

In this section, we start by considering the \textbf{(adv)+(stoc)} scenario.

This setting is the one already studied in \citep{Maran2026OnlineMM}, with the difference that we assume, throughout this section, to have full feedback, while we let the adversary to be adaptive.
Our approach will employ two instances of the Hedge algorithm~\citep{FREUND1997119}, thus significantly departing from the existing technique of~\citet{Maran2026OnlineMM}, where the algorithm  is mainly based on Follow the Perturbed Leader (FTPL) \citep{JMLR:v6:hutter05a}.

\subsection{Algorithm}

\begin{algorithm}[!htp]
\caption{Market Making with Experts}
\label{alg:Hedge_onthegrid}  
    \begin{algorithmic}[1]
     \Input{Time horizon $T$, $K$, $\eta$}   
        \State Initialize $A_1 \gets 1$, $B_1 \gets 0$, and grid $\gG$
        \State Initialize $W_1^A, W_1^B$ to 1 on $\gG$ \label{Hedge_onthegrid_line2} 
        \For{$t = 1, \dots, T$}
            \State Form $w_t^A,w_t^B$ by normalizing $W_t^A, W_t^B$ 
            \State Draw $Y_t \sim w^A_t$ and  $X_t \sim w^B_t$
            \State Post prices $(A_t,B_t) \gets(\max\{ X_t,Y_t\}, \min\{ X_t,Y_t\})$ \label{Hedge_onthegrid_line6}
            \State Gain utility $u_t(A_t,B_t)$
            \State Observe $(M_t,V_t)$ and reconstruct $u_t^A(a), u_t^B(b)$ for all $a,b \in \gG$ \label{Hedge_onthegrid_line7}
            \State Update weights for all $a,b \in \gG$: \[W_{t+1}^A(a) \gets W^A_t(a)\exp\left(\eta\,\, u_t^A(a)\right),\quad W_{t+1}^B(b) \gets W^B_t(b)\exp\left(\eta\,\, u_t^B(b)\right), \]\label{Hedge_onthegrid_line8}
        \EndFor
    \end{algorithmic}
\end{algorithm}

Our algorithm, whose pseudocode is provided in Algorithm~\ref{alg:Hedge_onthegrid}, discretizes the bid and ask prices using the uniform grid $
\gG = \left\{\frac{i}{K} : i = 0,\dots,K\right\}.
$
The resulting Cartesian product $\gG \times \gG$ consists of $(K+1)^2$ points in $[0,1]^2$ and serves as the discrete action set.
We then run two independent instances of Hedge~\citep{Orabona} with the same learning rate $\eta>0$, one for bid prices and one for ask prices. Each instance maintains weights $\{W_t^B\}_{t=1}^T$ and $\{W_t^A\}_{t=1}^T$, respectively, and the corresponding probability distributions $\{w_t^B\}_{t=1}^T$ and $\{w_t^A\}_{t=1}^T$ over the one-dimensional grid $\gG$ (Line~\ref{Hedge_onthegrid_line2}).
At each round, the algorithm samples a bid and an ask price from the corresponding distributions and swaps them if the resulting pair violates the constraint defining $\gU$ (Line~\ref{Hedge_onthegrid_line6}). It then posts the resulting pair of prices and updates the weights of both Hedge instances (Line~\ref{Hedge_onthegrid_line8}). We emphasize that the swap only affects the prices played, not the weight updates. Since feedback is fully observable, at the end of each round we can evaluate the utility associated with every grid price in its respective role as a bid or an ask. Each Hedge instance is therefore updated using these utilities, regardless of whether the sampled prices were swapped (Lines~\ref{Hedge_onthegrid_line7}--\ref{Hedge_onthegrid_line8}).

\subsection{Theoretical Results}

In this section, we analyze Algorithm~\ref{alg:Hedge_onthegrid} and establish the associated $\tilde{\gO}(\sqrt{T})$ regret bound.

We start by stating the following lemma, which shows how enforcing feasibility by swapping the prices predicted by the two Hedge instances never decreases the utility. More precisely, the utility remains unchanged when the sampled prices already satisfy the constraint, and increases by $X_t-Y_t$ otherwise. Thus, the swap in Line~\ref{Hedge_onthegrid_line6} of Algorithm~\ref{alg:Hedge_onthegrid} guarantees a pair of prices whose utility is at least as large as that of the original predictions. This property will allow us to transfer the utility guarantees of the two Hedge instances to the actions actually played by the algorithm.\footnote{When full feedback is available, this lemma is not necessary. Indeed, it is possible to run a single Hedge instance on the two-dimensional grid to recover the same regret guarantees. We employ two Hedge instances invoking Lemma~\ref{lemma:order_the_game} as a warm-up to the next section.}

\begin{restatable}{lemma}{lemmaOderTheGame}
    \label{lemma:order_the_game}
        Consider $X_t\sim w^B_t$ and $Y_t \sim w^A_t$. Then it holds that:
    \[
    u_t(\max\{ X_t,Y_t\}, \min\{ X_t,Y_t\}) = u_t(Y_t,X_t) + (X_t - Y_t)_+,
    \]
    where we denote $(x)_+=\max\{0,x \}$. 
    As a consequence, for any $t \in \{1, \dots, T\}$ it holds that $U_t(A_t,B_t) \geq U_t(Y_t,X_t)$ and $U(A_t,B_t) \geq U(Y_t,X_t)$.
\end{restatable}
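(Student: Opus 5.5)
The plan is to prove the identity pointwise: for every fixed realization of $(V_t,M_t)$ and every pair of sampled grid points $X_t,Y_t$, I will show $u_t(\max\{X_t,Y_t\},\min\{X_t,Y_t\}) = u_t(Y_t,X_t)+(X_t-Y_t)_+$. The two inequalities then follow by taking expectations over the valuation. Since the utility is defined by the explicit formula $u(a,b,v,m)=(a-m)\mathbb{I}\{v>a\}+(m-b)\mathbb{I}\{v\le b\}$, valid for any real $a,b$ (not only for pairs in $\gU$), I will argue by a finite case analysis. The identity will in fact hold for every $v,m\in[0,1]$.

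\textbf{Case $X_t\le Y_t$.} Here $\max\{X_t,Y_t\}=Y_t$ and $\min\{X_t,Y_t\}=X_t$, so the left-hand side is exactly $u_t(Y_t,X_t)$, and $(X_t-Y_t)_+=0$.

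\textbf{Case $X_t>Y_t$.} The played pair is $(X_t,Y_t)$, while the ``unswapped'' quantity is $u(Y_t,X_t,v,m)=(Y_t-m)\mathbb{I}\{v>Y_t\}+(m-X_t)\mathbb{I}\{v\le X_t\}$. I will split on the position of $v$ relative to $Y_t<X_t$.
\begin{itemize}
\item If $v\le Y_t$, the swapped utility is $m-Y_t$ and the unswapped one is $m-X_t$.
\item If $Y_t<v\le X_t$, the swapped utility is $0$, while both indicators are active in the unswapped one, giving $(Y_t-m)+(m-X_t)=Y_t-X_t$.
\item If $v>X_t$, the swapped utility is $X_t-m$ and the unswapped one is $Y_t-m$.
\end{itemize}
In all three subcases the difference equals $X_t-Y_t=(X_t-Y_t)_+$. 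This proves the identity.

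For the consequences, I condition on $X_t,Y_t$ and $M_t$. The valuation $V_t$ is a fresh i.i.d.\ draw, independent of the learner's sampling and of the adversary's choice of $M_t$ given the past; this independence is the only point needing care, because of the adaptive adversary. Taking $\mathbb{E}_V$ of the pointwise identity gives $U_t(A_t,B_t)=U_t(Y_t,X_t)+(X_t-Y_t)_+\ge U_t(Y_t,X_t)$. For $U$, note that $U(a,b)=\mathbb{E}_V[u(a,b,V,\mu_T)]$, since $U$ has the same form as $U_t$ with $M_t$ replaced by $\mu_T$ (equivalently, $U=\frac1T\sum_t U_t$ by linearity in $m$). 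Applying the pointwise identity with $m=\mu_T$ and taking expectations therefore gives $U(A_t,B_t)\ge U(Y_t,X_t)$. I expect no real obstacle: the middle subcase, where both indicators fire in the unswapped utility, is the only non-obvious computation.
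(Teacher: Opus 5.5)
Your proof is correct and follows the paper's argument. Both do a case analysis on the order of $X_t,Y_t$ and on the position of $V_t$, then integrate the pointwise identity, which holds for every $m$ and so in particular for $m=M_t$ and $m=\mu_T$, against $F$. Your half-open split $v\le Y_t$, $Y_t<v\le X_t$, $v>X_t$ also absorbs the boundary cases $V_t\in\{X_t,Y_t\}$ that the paper handles separately, and since $U_t$ is defined by the closed-form expression in $F$, the independence caveat you raise is not actually needed.
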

Next, we show that restricting the action set to the grid incurs only a bounded loss in expected utility. Specifically, there exists a grid point whose expected utility is within $\nicefrac{5}{2K}$ of the supremum over the continuous feasible set $\gU$. This allows us to relate performance on the grid to the continuous optimum.
\begin{restatable}{lemma}{preregretOMDreduction}
\label{lemma:pre_regretOMDreduction}
        There exists a point $(A^*,B^*) \in  \gG \times \gG$ such that:
        $$
        0 \leq \sup_{(a,b)\in \gU} U(a,b)-U(A^*,B^*) \leq \frac{5}{2K}.
        $$
\end{restatable}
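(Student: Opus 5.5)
The plan is to take a near-maximizer of $U$ over $\gU$ and round each coordinate to the grid $\gG$ in the direction that can only increase the trade probability. For the bid that means rounding up, and for the ask rounding down. This is needed because $F$ is merely a c.d.f. (right-continuous and monotone, with no Lipschitz assumption), so the only usable regularity is monotonicity. The argument is organized so that the only error comes from the price factors $(\mu_T-b)$ and $(a-\mu_T)$, which move by at most $1/K$ each.

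\textbf{Near-optimal point.} Fix $(a,b)\in\gU$ with $U(a,b)\ge \sup_{\gU}U-\frac{1}{2K}$.

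\textbf{Reduction to $b\le\mu_T\le a$.} If $b>\mu_T$, the bid term $(\mu_T-b)F(b)$ is nonpositive, so replacing $b$ by $0$ does not decrease $U$. The bid term becomes $\mu_T F(0)\ge 0$, and feasibility $0\le a$ is preserved. Symmetrically, if $a<\mu_T$, replacing $a$ by $1$ makes the ask term $0$, which is at least the old nonpositive value. So we may assume $b\le \mu_T\le a$.

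\textbf{Rounding the bid.} Let $b'=\lceil bK\rceil/K\in\gG$, so $b\le b'\le b+1/K$ and $F(b')\ge F(b)$. Using $\mu_T-b\ge 0$,
\[
(\mu_T-b')F(b')\ge(\mu_T-b)F(b')-\tfrac1K\ge(\mu_T-b)F(b)-\tfrac1K .
\]

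\textbf{Rounding the ask.} Let $a'=\lfloor aK\rfloor/K\in\gG$, so $a-1/K\le a'\le a$ and $1-F(a')\ge 1-F(a)$. Using $a-\mu_T\ge0$,
\[
(a'-\mu_T)(1-F(a'))\ge (a-\mu_T)(1-F(a'))-\tfrac1K\ge (a-\mu_T)(1-F(a))-\tfrac1K .
\]

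\textbf{Combining.} Summing the two bounds gives $U(a',b')\ge U(a,b)-\frac2K\ge\sup_{\gU}U-\frac{5}{2K}$. Set $(A^*,B^*)=(a',b')$; this gives the upper bound of the lemma.

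\textbf{Lower bound $0\le\cdot$.} This requires $U(A^*,B^*)\le\sup_{\gU}U$. The rounded pair may violate $B^*\le A^*$, since $b'$ moved up and $a'$ moved down, so it may lie in $\gG\times\gG$ but outside $\gU$. Here I would invoke the pointwise identity of Lemma~\ref{lemma:order_the_game}: for every $(v,m)$,
\[
u(\max\{x,y\},\min\{x,y\},v,m)=u(y,x,v,m)+(x-y)_+ .
\]
Averaging over $V$ and over the $M_t$ gives $U(y,x)\le U(\max\{x,y\},\min\{x,y\})$. The swapped pair lies in $\gU$, so $U(A^*,B^*)\le\sup_{\gU}U$.

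The main obstacle I anticipate is handling the discontinuities of $F$ without any smoothness. This is exactly why the rounding directions must be chosen so that monotonicity of $F$ helps, together with the preliminary sign reduction $b\le\mu_T\le a$ that makes those monotone comparisons valid. The crossing issue after rounding is the other subtle point, and it is dispatched by the swap identity.
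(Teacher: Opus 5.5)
Your proof is correct. Its core matches the paper's: a near-maximizer with slack $\frac{1}{2K}$, a reduction to $b\le\mu_T\le a$, and rounding the bid up and the ask down so that only monotonicity of $F$ is used, at a cost of at most $\frac1K$ per coordinate.

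The difference is in how you handle the threshold $\mu_T$ after rounding. The paper splits into cases. Whenever $\lceil Kb\rceil/K$ would exceed $\mu_T$ (or $\lfloor Ka\rfloor/K$ would fall below it), it clamps to $\lfloor K\mu_T\rfloor/K$ (resp.\ $\lceil K\mu_T\rceil/K$) and uses $0\le\mu_T-b<\frac1K$ in that case. This keeps $B^*\le\mu_T\le A^*$, so the grid point lies in $\gU$ and $0\le\sup_{\gU}U-U(A^*,B^*)$ is automatic; the paper leaves this implicit.

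You observe instead that the rounding estimate only needs the signs $\mu_T-b\ge 0$ and $a-\mu_T\ge 0$ at the unrounded point. Crossing $\mu_T$ is therefore harmless, and the boundary correction of Figure~\ref{fig:grid-discretization} is unnecessary. The price is that $(a',b')$ can be infeasible when $a$ and $b$ lie in the same grid cell, which you correctly repair with the swap identity. That step is sound, since the identity in Lemma~\ref{lemma:order_the_game} is proved pointwise. It can also be read off the formula for $U$ directly: $U(x,y)-U(y,x)=x-y$ for all $x,y$.

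You could also output the swapped pair $(\max\{a',b'\},\min\{a',b'\})$ itself. It lies in $\gU\cap(\gG\times\gG)$ and satisfies both inequalities. Either way, Theorem~\ref{th:regret_fullfeedadvstoc} only needs a comparator in $\gG\times\gG$.
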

\begin{figure}[t]
\centering
\resizebox{0.85\linewidth}{!}{%
\begingroup

\definecolor{omdInk}{HTML}{243248}
\definecolor{omdMuted}{HTML}{657286}
\definecolor{omdGrid}{HTML}{CBD2DC}
\definecolor{omdGridDot}{HTML}{ABB5C2}
\definecolor{omdBlue}{HTML}{276AC4}
\definecolor{omdRed}{HTML}{D6524D}
\definecolor{omdBlueWash}{HTML}{E8F1FC}
\definecolor{omdRoseWash}{HTML}{FAF0F0}
\definecolor{omdHeader}{HTML}{F1F4FA}

\begin{tikzpicture}[
  x=1cm,y=1cm,
  font=\sffamily\fontsize{10.5}{13}\selectfont,
  text=omdInk,
  axis/.style={
    -{Stealth[length=2.2mm,width=1.5mm]},
    draw=omdInk,line width=.85pt
  },
  threshold/.style={
    draw=omdMuted,line width=.85pt,
    dash pattern=on 4pt off 3pt
  },
  rounding/.style={
    -{Stealth[length=2.7mm,width=2mm]},
    draw=omdBlue,line width=1.8pt,
    rounded corners=2pt,
    shorten <=5pt,shorten >=5pt
  },
  pointlabel/.style={
    font=\fontsize{12}{14}\selectfont,
    fill=white,fill opacity=.93,text opacity=1,
    inner sep=2pt
  },
  dimension/.style={
    draw=omdMuted,line width=.65pt,
    decorate,
    decoration={brace,mirror,amplitude=3pt}
  },
  paneltitle/.style={
    font=\sffamily\bfseries\fontsize{15}{18}\selectfont
  },
  panelsubtitle/.style={
    font=\sffamily\fontsize{10}{12}\selectfont,
    text=omdMuted
  },
  explanation/.style={
    font=\sffamily\fontsize{10.5}{13}\selectfont,
    text=omdMuted,align=center,text width=7cm
  }
]

\begin{scope}

  \fill[omdHeader,rounded corners=5pt]
    (-.56,6.72) rectangle (6.56,7.72);
  \node[paneltitle] at (3,7.36)
    {Inward rounding};
  \node[panelsubtitle] at (3,6.96)
    {The threshold is not crossed};

  \fill[omdRoseWash] (0,0) rectangle (6,6);
  \fill[omdBlueWash] (2.65,0) rectangle (6,2.65);
  \fill[omdBlue!14] (4,1) rectangle (5,2);

  \draw[step=1,omdGrid,line width=.45pt]
    (0,0) grid (6,6);
  \foreach \i in {0,...,6} {
    \foreach \j in {0,...,6} {
      \fill[omdGridDot]
        (\i,\j) circle[radius=1.15pt];
    }
  }

  \draw[omdBlue!58,line width=.85pt]
    (4,1) rectangle (5,2);

  \draw[threshold] (2.65,0) -- (2.65,6.10);
  \draw[threshold] (0,2.65) -- (6.10,2.65);

  \draw[axis] (-.16,0) -- (6.35,0)
    node[right=2pt] {$a$};
  \draw[axis] (0,-.16) -- (0,6.35)
    node[above=2pt] {$b$};

  \node[below=6pt,font=\fontsize{12}{14}\selectfont]
    at (2.65,0) {$\mu_T$};
  \node[left=6pt,font=\fontsize{12}{14}\selectfont]
    at (0,2.65) {$\mu_T$};

  \coordinate (P) at (4.88,1.13);
  \coordinate (Q) at (4,2);

  \draw[rounding] (P) -- (4,1.13) -- (Q);

  \filldraw[fill=omdRed,draw=white,line width=1pt]
    (P) circle[radius=4pt];
  \filldraw[fill=omdBlue,draw=white,line width=1pt]
    (Q) circle[radius=4pt];

  \node[pointlabel,text=omdRed,anchor=north]
    at (5.03,.78) {$(a^*,b^*)$};
  \draw[omdRed!60,line width=.6pt]
    (4.99,.82) -- (4.91,1.02);

  \node[pointlabel,text=omdBlue,anchor=south west]
    at (4.25,2.03) {$(A^*,B^*)$};
  \draw[omdBlue!65,line width=.6pt]
    (4.24,2.10) -- (4.06,2.03);

  \draw[dimension] (4,-.86) -- (5,-.86)
    node[midway,below=7pt,
         font=\fontsize{11}{13}\selectfont] {$1/K$};

  \node[explanation] at (3,-1.83)
    {Move left and up to a grid node.};

\end{scope}

\begin{scope}[xshift=8.55cm]

  \fill[omdHeader,rounded corners=5pt]
    (-.56,6.72) rectangle (6.56,7.72);
  \node[paneltitle] at (3,7.36)
    {Boundary correction};
  \node[panelsubtitle] at (3,6.96)
    {Close-up of the threshold cell};

  \fill[omdRoseWash] (0,0) rectangle (6,6);
  \fill[omdBlueWash] (3.30,0) rectangle (6,3.30);

  \draw[step=2,omdGrid,line width=.45pt]
    (0,0) grid (6,6);
  \foreach \i in {0,2,4,6} {
    \foreach \j in {0,2,4,6} {
      \fill[omdGridDot]
        (\i,\j) circle[radius=1.15pt];
    }
  }

  \draw[omdBlue!58,line width=.85pt]
    (2,2) rectangle (4,4);

  \draw[threshold] (3.30,0) -- (3.30,6.10);
  \draw[threshold] (0,3.30) -- (6.10,3.30);

  \draw[axis] (-.16,0) -- (6.35,0)
    node[right=2pt] {$a$};
  \draw[axis] (0,-.16) -- (0,6.35)
    node[above=2pt] {$b$};

  \node[below=6pt,font=\fontsize{12}{14}\selectfont]
    at (3.30,0) {$\mu_T$};
  \node[left=6pt,font=\fontsize{12}{14}\selectfont]
    at (0,3.30) {$\mu_T$};

  \coordinate (Pb) at (3.50,3.00);
  \coordinate (Qb) at (4,2);
  \coordinate (Rejected) at (2,4);

  \draw[
    -{Stealth[length=2.1mm,width=1.4mm]},
    omdRed!65,line width=1pt,
    dash pattern=on 3.5pt off 3pt,
    shorten <=5pt,shorten >=7pt
  ] (Pb) -- (Rejected);

  \draw[rounding] (Pb) -- (4,3) -- (Qb);

  \filldraw[fill=omdRed,draw=white,line width=1pt]
    (Pb) circle[radius=4pt];
  \filldraw[fill=omdBlue,draw=white,line width=1pt]
    (Qb) circle[radius=4pt];

  \draw[white,line width=4.6pt]
    (1.89,3.89) -- (2.11,4.11)
    (1.89,4.11) -- (2.11,3.89);
  \draw[omdRed!80,line width=1.45pt,line cap=round]
    (1.89,3.89) -- (2.11,4.11)
    (1.89,4.11) -- (2.11,3.89);

  \node[
    align=center,text=omdRed!85!omdInk,
    font=\sffamily\fontsize{10}{12}\selectfont
  ] at (1.60,4.72)
    {Discarded:\\crosses the threshold};

  \draw[omdRed!50,line width=.6pt]
    (1.84,4.32) -- (1.96,4.14);

  \node[pointlabel,text=omdRed,anchor=west]
    at (4.18,3.50) {$(a^*,b^*)$};
  \draw[omdRed!60,line width=.6pt]
    (4.17,3.47) -- (3.60,3.08);

  \node[pointlabel,text=omdBlue,anchor=west]
    at (4.20,1.65) {$(A^*,B^*)$};
  \draw[omdBlue!65,line width=.6pt]
    (4.22,1.76) -- (4.06,1.94);

  \draw[dimension] (2,-.86) -- (4,-.86)
    node[midway,below=4pt,
         font=\fontsize{11}{13}\selectfont] {$1/K$};

  \node[explanation] at (3,-1.90)
    {Here, the safe grid node lies right and down.};

\end{scope}

\draw[omdGrid!65,line width=.65pt]
  (-.56,-2.34) -- (15.11,-2.34);

\filldraw[fill=omdRed,draw=white,line width=.7pt]
  (.02,-2.89) circle[radius=3.4pt];
\node[anchor=west] at (.25,-2.89)
  {Optimal action};

\filldraw[fill=omdBlue,draw=white,line width=.7pt]
  (5.03,-2.89) circle[radius=3.4pt];
\node[anchor=west] at (5.26,-2.89)
  {Discretized action};

\filldraw[
  fill=omdBlueWash,draw=omdBlue!35,line width=.55pt
] (9.78,-3.05) rectangle (10.10,-2.73);
\node[anchor=west] at (10.25,-2.89)
  {$b\leq\mu_T\leq a$};

\end{tikzpicture}
\endgroup
}
\caption{Discretization around the optimal action.
\emph{Left:} inward rounding.
\emph{Right:} correction when both coordinates would cross $\mu_T$.}
\label{fig:grid-discretization}
\end{figure}
Figure~\ref{fig:grid-discretization} illustrates the idea behind the proof of Lemma~\ref{lemma:pre_regretOMDreduction}. We start from a nearly optimal point and choose a nearby point on the grid, taking care not to cross the boundary defined by $\mu_T$, as doing so could reduce the expected utility. This choice keeps the discretization loss small. We provide the full argument in Appendix~\ref{app:advstoc_fullfeed}.

We are now ready to provide the final regret bound attained by Algorithm~\ref{alg:Hedge_onthegrid}.

\begin{restatable}{theorem}{Thregretfull}
    \label{th:regret_fullfeedadvstoc}
    Let $\delta\in (0,1)$ and $T \geq 2$.
    Algorithm~\ref{alg:Hedge_onthegrid} with $K= \lceil \sqrt{T} \rceil, \eta = \sqrt \frac{\log(T)}{T}$ attains regret: $$R_T \leq  \gO \left(\sqrt{T \log(T)} + \sqrt{T \log(1/\delta)}+ \log(1/\delta) \right),$$
    with probability at least $1-\delta$.
\end{restatable}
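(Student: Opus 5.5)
We decompose the regret in \eqref{eq:meanregret_def} into three pieces: a discretization error, the pseudo-regret of the two Hedge instances on $\gG$, and martingale concentration terms.

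\textbf{Discretization.} By Lemma~\ref{lemma:pre_regretOMDreduction} there is a grid point $(A^*,B^*)\in\gG\times\gG$ with $T\sup_{\gU}U - T\,U(A^*,B^*)\le \frac{5T}{2K}=\gO(\sqrt T)$ for $K=\lceil\sqrt T\rceil$. Since $U$ is the average of the $U_t$, we have $T\,U(A^*,B^*)=\sum_t U_t(A^*,B^*)=\sum_t [U^B_t(B^*)+U^A_t(A^*)]$, where $U^A_t,U^B_t$ are the expectations over $V_t$ of $u^A_t,u^B_t$. Note that $A^*\ge B^*$ is not needed here, because the two components are compared separately.

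\textbf{Removing the swap.} By Lemma~\ref{lemma:order_the_game}, $U_t(A_t,B_t)\ge U_t(Y_t,X_t)=U^A_t(Y_t)+U^B_t(X_t)$. It therefore suffices to bound
\[
\sum_t\bigl[U^A_t(A^*)-U^A_t(Y_t)\bigr] \quad\text{and}\quad \sum_t\bigl[U^B_t(B^*)-U^B_t(X_t)\bigr]
\]
separately.

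\textbf{Regret of each Hedge instance.} Consider the ask instance; the bid instance is symmetric. Write
\[
U^A_t(A^*)-U^A_t(Y_t) = \bigl[U^A_t(A^*)-u^A_t(A^*)\bigr] + \bigl[u^A_t(A^*)-\langle w^A_t,u^A_t\rangle\bigr] + \bigl[\langle w^A_t,u^A_t\rangle-U^A_t(Y_t)\bigr].
\]
\begin{itemize}
\item The middle sum is the realized Hedge regret over $K+1$ experts with rewards in $[-1,1]$. The standard Hedge bound gives $\frac{\log(K+1)}{\eta}+\eta T=\gO(\sqrt{T\log T})$ for $\eta=\sqrt{\log T/T}$, deterministically and against any adaptive reward sequence.
\item For the first sum, $V_t$ is independent of the past and of $M_t$, where $M_t$ is chosen adaptively but before $V_t$ is drawn (or independently of it). Hence $\mathbb E[u^A_t(A^*)\mid\mathcal F_{t-1},M_t]=U^A_t(A^*)$, and the sum is a martingale difference sequence with bounded increments. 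Azuma--Hoeffding bounds it by $\gO(\sqrt{T\log(1/\delta)})$. The comparator $A^*$ depends on $\mu_T$, hence on the whole sequence, so I would take a union bound over all $K+1$ grid points. This adds $\log K$ inside the root and is absorbed in $\sqrt{T\log T}$.
\item For the third sum, $w^A_t$ is $\mathcal F_{t-1}$-measurable and $Y_t$ is drawn independently of $V_t$. Conditioned on $\mathcal F_{t-1}$ and $M_t$, the expectation of $U^A_t(Y_t)$ over $Y_t$ equals $\sum_a w^A_t(a)U^A_t(a)$. This in turn equals the expectation over $V_t$ of $\langle w^A_t,u^A_t\rangle$. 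So this sum is again a bounded martingale difference sum, controlled by Azuma--Hoeffding; alternatively Freedman's inequality gives the stated $\log(1/\delta)$ term.
\end{itemize}

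\textbf{Conclusion.} A union bound over the events for both instances, with $\delta$ rescaled by a constant, yields the claim.

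\textbf{Main obstacle.} The main obstacle is the measurability bookkeeping with an adaptive adversary. Two points need care: that $M_t$ may depend on past $Y_s,X_s,V_s$ but $V_t$ remains independent of $(M_t,\mathcal F_{t-1})$; and that the comparator $(A^*,B^*)$ is data-dependent through $\mu_T$, which is handled by the union bound over the grid.
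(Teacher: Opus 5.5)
Your proposal is correct and follows essentially the same route as the paper. It discretizes via Lemma~\ref{lemma:pre_regretOMDreduction}, removes the swap via Lemma~\ref{lemma:order_the_game}, applies the Hedge bound to each one-dimensional instance, and controls the comparator term by Azuma--Hoeffding with a union bound over the grid. The only cosmetic difference is that you merge the learner-side concentration into the single martingale difference $\langle w^A_t,u^A_t\rangle-U^A_t(Y_t)$, whereas the paper combines a high-probability Hedge bound on the sampled actions with a separate Azuma step for $u_t(Y_t,X_t)-U_t(Y_t,X_t)$.
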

In Theorem~\ref{th:regret_fullfeedadvstoc}, we derive a regret bound by adding the standard Hedge guarantees for the two one-dimensional instances. We analyze each instance using its original sampled quote, before any swap takes place. Although the swap couples the bid and ask prices actually posted, Lemma~\ref{lemma:order_the_game} shows that it never decreases utility.
We can therefore analyze the two Hedge instances separately and carry their combined guarantee over to the actions played by the algorithm.
The regret bound in Theorem~\ref{th:regret_fullfeedadvstoc} depends only logarithmically on the grid size $K$, as a consequence of the Hedge guarantee. This dependence is optimal in the standard expert setting. We would expect a different dependence on $K$ under limit order book feedback, which provides less information than full feedback.

\section{Optimal Bounds For The Limit Order Book Case}\label{sec:advstoc_LOB}
In this section, we study the more interesting setting of the limit order book feedback, where the information available to the learner depends on whether a trade occurs. 
If no exchange takes place, the learner receives full feedback; otherwise, only the utility is observed with the type of trade.

\subsection{Algorithm}

\begin{algorithm}[t]
  \caption{Market Making with Limit Order Book}
\label{alg:MM_partialfeed_exp3}  
\begin{algorithmic}[1]
     \Input{Time horizon $T$, parameters $\eta, \beta, K$}   
     \State Initialize $\mathcal{G}_A,\mathcal{G}_B$ \label{Partial_feedback_line1}
     \State Set $W_1^A,W_1^B$ to 1 \label{Partial_feedback_line2}
     \For{$t = 1, \dots , T$}
        \State Form probabilities $w_t^A,w_t^B$ normalizing $W_t^A,W_t^B$\label{Partial_feedback_line4}
        \State Compute $Q_t$ with Equation~\eqref{eq:Qderivation} in internal grid points, $Q_t(0)=Q_t(1)=1$
        \State Draw $X_t \sim w^B_t$ and $Y_t \sim w^A_t$\label{Partial_feedback_line6}
        \State Play prices $(A_t,B_t)=(\max\{X_t,Y_t\},\min\{X_t,Y_t\})$
        \State Observe $M_t$, observe $V_t$ if no trade occurred
        \State $\forall z \in (\mathcal{G}_A \cup \mathcal{G}_B) \cap [B_t,A_t] $, compute $ \ell^s_t(z)$, for $s \in \{ A,B\}$ respectively
        \State Form the loss estimator $\hat \ell^s_t$, with $s \in \{ A,B\}$, as in Equation~\eqref{def:ellcap_int}
        \State Update weights $W_{t+1}^A,W_{t+1}^B$
     \EndFor
\end{algorithmic}
\end{algorithm}

We now present Algorithm~\ref{alg:MM_partialfeed_exp3} for the limit order book feedback and analyze its regret.

Similarly to the previous approach, Algorithm~\ref{alg:MM_partialfeed_exp3} treats the bid and ask prices separately, maintaining disjoint grids $\gG_B$ and $\gG_A$ (Line~\ref{Partial_feedback_line1}), weights $W_t^B$ and $W_t^A$ (Line~\ref{Partial_feedback_line2}), and sampling distributions $w^A_t$ and $w^B_t$ (Lines~\ref{Partial_feedback_line4}--\ref{Partial_feedback_line6}).
Unlike in the full-feedback setting, we must account for the possibility that grid points coincide with atoms of $F$\footnote{We assume throughout the paper that $F$ has no atom at zero. This reflects the fact that traders in financial markets typically assign strictly positive valuations to the assets they trade.}, which might affect the observed feedback. 
This issue is addressed in Lemma~\ref{lemma:partfeed_reconstruction} in Appendix~\ref{app:advstoc_LOB}, where we show that, independently of the underlying distribution of $\{V_t\}_{t=1}^T$, we can still collect the necessary information for our estimation.

For an integer $K \geq 1$, we define the two grids as:
\begin{align*}
    \gG_B = \left\{\frac{i}{K} + \frac{1}{4K}: i=0,\dots,K-1\right\} \cup \{1\}, \
    \gG_A = \left\{ \frac{i}{K} + \frac{1}{2K}: i=0,\dots,K-1\right\} \cup \{0\}.
\end{align*}
Each grid contains $K$ internal points and one endpoint.
We run two parallel instances of an EXP3-based algorithm~\citep{doi:10.1137/S0097539701398375}. For convenience, we express the utility contributions as losses in $[0,1]$ by defining
$
    \ell_t^A(a) := \nicefrac{(1-u_t^A(a))}{2},
    \ell_t^B(b) := \nicefrac{(1-u_t^B(b))}{2}.
$
Let $\beta>0$ be an implicit exploration parameter \citep{neu}.
For each internal grid point $x \in \gG_s$, with $s \in \{A,B\}$, let
$
    Q_t(x) := \mathbb{P}\bigl(x \in [B_t,A_t] \cup \{0,1\}\bigr)
$
denote the probability that its loss is observed at round $t$. 
We set $Q_t(1)=Q_t(0)=1$, since losses at the endpoints can always be computed from $M_t$, which is observed at every round.
We further denote $O_t(x) \coloneqq \ind{x \in [B_t,A_t] \cup \{0,1\}}$ as the event that a point $x \in \gG_s, \,\, s \in \{A,B\}$ has deducible utility. 
The limit order book feedback allows us to recover the exact losses at all grid points in $[B_t,A_t]$, as we will show in Lemma~\ref{lemma:partfeed_reconstruction}. We use these observed losses to construct the estimates:
\begin{align}\label{def:ellcap_int}
    \hat \ell_t^A(x) &= \frac{O_t(x) \ell_t^A(x)}{Q_t(x)+\beta},
    \qquad
    \hat \ell_t^B(x) = \frac{O_t(x) \ell_t^B(x)}{Q_t(x)+\beta}.
\end{align}
The implicit exploration parameter $\beta$ prevents the estimates from becoming too large when a grid point is rarely observed, helping to control their variance. 

For each component $s \in \{A,B\}$, we update the weights according to:
\[
    W_{t+1}^s(x) = W_t^s(x)\exp\bigl(-\eta\hat\ell_t^s(x)\bigr),
\]
and denote the corresponding normalized weights by $w_t^s, \, s\in \{A,B\}$, which represent the probability of sampling for the two problems (Line~\ref{Partial_feedback_line6}). 

Finally, we notice that for any internal grid point $x \in \gG_s$, with $s \in \{A,B\}$, we can compute the observation probability directly from the sampling distributions:
\begin{align}\label{eq:Qderivation}
    Q_t(x) = 1-\left(\sum_{y \in \gG_A: y< x}w^A_t(y) \right) \hspace{-0.2cm} \left(\sum_{y \in \gG_B: y< x}w^B_t(y)\right)\hspace{-0.1cm}- \hspace{-0.1cm} \left(\sum_{y \in \gG_A: y> x}w^A_t(y) \right)\hspace{-0.2cm}\left(\sum_{y \in \gG_B:  y> x}w^B_t(y)\right).
\end{align}
Indeed, $x$ lies outside the played interval $[B_t,A_t]$ precisely when both sampled prices are strictly below $x$ or both are strictly above it. Since the prices are sampled independently, the probabilities of these two events are given by the products above. 

\subsection{Theoretical results}
In this section, we establish the regret bound for Algorithm~\ref{alg:MM_partialfeed_exp3}. 

\begin{figure}[t]
\centering
\resizebox{0.85\linewidth}{!}{%
\begingroup
\definecolor{fbInk}{HTML}{243248}
\definecolor{fbMuted}{HTML}{6B7788}
\definecolor{fbGrid}{HTML}{CDD5DF}
\definecolor{fbBlue}{HTML}{276AC4}
\definecolor{fbRed}{HTML}{D6524D}
\definecolor{fbGreen}{HTML}{238579}
\definecolor{fbWash}{HTML}{E8F4F0}
\definecolor{fbHeader}{HTML}{F1F4FA}

\def\fbConcentratedP{{.00051,.00371,.03288,.14796,.31477,.31477,
                     .14796,.03288,.00371,.00051,.00034}}
\def\fbConcentratedQ{{.00034,.00051,.00371,.03288,.14796,.31477,
                     .31477,.14796,.03288,.00371,.00051}}
\def\fbDiffuse{{.09091,.09091,.09091,.09091,.09091,.09091,
               .09091,.09091,.09091,.09091,.09090}}

\begin{tikzpicture}[
  x=1cm,y=1cm,
  font=\sffamily\fontsize{12}{14}\selectfont,text=fbInk,
  mass/.style={line width=1.2pt,line cap=round},
  guide/.style={draw=fbMuted!65,densely dashed,line width=.6pt},
  title/.style={font=\sffamily\bfseries\fontsize{12}{16}\selectfont},
  subtitle/.style={font=\sffamily\fontsize{10}{12}\selectfont,
                   text=fbMuted},
  mathlabel/.style={font=\fontsize{12}{14}\selectfont,inner sep=2pt}
]

\newcommand{\fbPanel}[7]{%
\begin{scope}[xshift=#1cm]
  \fill[fbHeader,rounded corners=4pt]
    (-.28,3.34) rectangle (7.21,4.57);
  \node[title] at (3.465,4.17) {#2};
  \node[subtitle] at (3.465,3.72) {#3};

  \node[subtitle,anchor=west] at (-.10,3.04)
    {Probability mass};
  \foreach \y in {1.50,2.10,2.70}
    \draw[fbGrid!60,densely dotted,line width=.45pt]
      (0,\y) -- (6.93,\y);
  \draw[fbGrid,line width=.65pt] (0,.90) -- (6.93,.90);

  \foreach \j in {0,...,10}{
    \pgfmathsetmacro{\xp}{.33*(2*\j)}
    \pgfmathsetmacro{\xq}{.33*(2*\j+1)}
    \pgfmathsetmacro{\hp}{.90+5.6*#4[\j]}
    \pgfmathsetmacro{\hq}{.90+5.6*#5[\j]}
    \draw[mass,fbRed] (\xp,.90) -- (\xp,\hp);
    \fill[fbRed] (\xp,\hp) circle[radius=1.9pt];
    \draw[mass,fbBlue] (\xq,.90) -- (\xq,\hq);
    \fill[fbBlue] (\xq,\hq) circle[radius=1.9pt];
  }

  \pgfmathsetmacro{\xb}{.33*#6}
  \pgfmathsetmacro{\xa}{.33*#7}
  \pgfmathtruncatemacro{\jb}{#6/2}
  \pgfmathtruncatemacro{\ja}{(#7-1)/2}
  \pgfmathsetmacro{\hb}{.90+5.6*#4[\jb]}
  \pgfmathsetmacro{\ha}{.90+5.6*#5[\ja]}

  \draw[fbRed,line width=1.05pt]
    (\xb,\hb) circle[radius=3.9pt];
  \draw[fbBlue,line width=1.05pt]
    (\xa,\ha) circle[radius=3.9pt];
  \draw[guide] (\xb,.86) -- (\xb,.13);
  \draw[guide] (\xa,.86) -- (\xa,.13);

  \fill[fbWash,rounded corners=2pt]
    (\xb,-.15) rectangle (\xa,.15);
  \draw[fbGrid,line width=.8pt] (-.12,0) -- (7.05,0);
  \draw[fbGreen,line width=2pt] (\xb,0) -- (\xa,0);
  \foreach \i in {0,...,21}{
    \pgfmathsetmacro{\xx}{.33*\i}
    \fill[fbGrid] (\xx,0) circle[radius=1.65pt];
  }
  \foreach \i in {#6,...,#7}{
    \pgfmathsetmacro{\xx}{.33*\i}
    \fill[fbGreen] (\xx,0) circle[radius=2.1pt];
  }
  \filldraw[fill=white,draw=fbRed,line width=1.2pt]
    (\xb,0) circle[radius=3pt];
  \filldraw[fill=white,draw=fbBlue,line width=1.2pt]
    (\xa,0) circle[radius=3pt];
  \node[mathlabel,text=fbRed,below=6pt] at (\xb,0) {$B_t$};
  \node[mathlabel,text=fbBlue,below=6pt] at (\xa,0) {$A_t$};

  \draw[fbInk,line width=.75pt] (3.30,.13) -- (3.30,.36);
  \filldraw[fill=fbInk,draw=white,line width=.65pt]
    (3.30,0) circle[radius=2.6pt];
  \node[mathlabel,above=0pt] at (3.30,.36) {$z_i$};

  \draw[fbGreen,line width=.65pt,decorate,
    decoration={brace,mirror,amplitude=3pt}]
    (\xb,-.72) -- (\xa,-.72);
  \node[text=fbGreen,font=\sffamily\fontsize{10}{12}\selectfont]
    at ({(\xb+\xa)/2},-1.13) {Feedback interval};
\end{scope}%
}

\fbPanel{0}{Concentrated sampling}
  {Most mass lies near the observation point}
  {\fbConcentratedP}{\fbConcentratedQ}{8}{13}
\fbPanel{8}{Diffuse sampling}
  {Mass is spread across the grid}
  {\fbDiffuse}{\fbDiffuse}{2}{19}

\draw[fbGrid,line width=.5pt] (-.28,-1.53) -- (15.21,-1.53);
\draw[mass,fbRed] (.15,-2.08) -- (.15,-1.83);
\fill[fbRed] (.15,-1.83) circle[radius=1.9pt];
\node[anchor=west] at (.34,-1.96) {$w^B_t$ (bid)};
\draw[mass,fbBlue] (3.03,-2.08) -- (3.03,-1.83);
\fill[fbBlue] (3.03,-1.83) circle[radius=1.9pt];
\node[anchor=west] at (3.22,-1.96) {$w^A_t$ (ask)};
\draw[fbMuted,line width=1pt]
  (6.08,-1.96) circle[radius=3.9pt];
\fill[fbMuted] (6.08,-1.96) circle[radius=1.6pt];
\node[anchor=west] at (6.30,-1.96) {Sampled actions};
\draw[fbGreen,line width=2pt] (10.53,-1.96) -- (11.03,-1.96);
\foreach \x in {10.53,10.78,11.03}
  \fill[fbGreen] (\x,-1.96) circle[radius=2.1pt];
\node[anchor=west] at (11.20,-1.96) {Observed grid points};

\end{tikzpicture}
\endgroup%
}
\caption{{\small 
The two panels illustrate sampling under a narrower (left) and wider (right) bid--ask spread. The bars $w^B_t$ and $w^A_t$ denote the sampling probabilities over prices, while $Q_t(z_i)$ is the probability of observing an internal point $z_i\in[B_t,A_t]$. 
Concentrating the sampling mass induces greater utility, but weakens observability of internal points. Consequently, although $Q_t(z_i)$ may be small, the aggregate importance-weight ratios $w^B_t(z_i)/Q_t(z_i)$ and $w^A_t(z_i)/Q_t(z_i)$ remain bounded, yielding the $\mathcal{O}(\log K)$ dependence in Lemma~\ref{lemma:bound_logweights}, provided that $\beta \propto 1/K$.
}}
\label{fig:feedback-mass}
\end{figure}

We begin by showing that the available feedback is sufficient to compute the loss estimates used by the algorithm, and then use these estimates to analyze its performance. 
Specifically, the first lemma shows that, despite the limited feedback, playing a price pair $(A_t,B_t)$ allows us to recover the exact losses at every grid point between the two quotes.
Thus, a single round provides information about an entire interval of grid points, which we use to construct the estimates in Equation~\eqref{def:ellcap_int}.
\begin{restatable}{lemma}{partfeedreconstruction}
\label{lemma:partfeed_reconstruction}
    Let $(A_t,B_t)$ be a price pair with $B_t \leq A_t$, played at time step $t$.
    Then, for any point $x\in (\gG_A \cup \gG_B)  \cap [B_t,A_t]$, it is possible to reconstruct $\ell^s_t(x), \,\, s \in \{A,B\}$, after playing $(A_t,B_t)$.
\end{restatable}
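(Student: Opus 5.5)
The plan is to reduce the reconstruction of $\ell^s_t(x)=\frac{1-u^s_t(x)}{2}$ to computing the two indicators $\ind{V_t>x}$ and $\ind{V_t\le x}$. Since $M_t$ is always observed, we have $u^A_t(x)=(x-M_t)\ind{V_t>x}$ and $u^B_t(x)=(M_t-x)\ind{V_t\le x}$, and these two indicators are complementary. So it suffices to show that, for every grid point $x\in(\gG_A\cup\gG_B)\cap[B_t,A_t]$, the feedback determines whether $V_t\le x$. We argue by cases on the three possible outcomes of the round.

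\emph{Case 1: no trade, i.e.\ $B_t<V_t\le A_t$.} Then $V_t$ is revealed, so $\ind{V_t\le x}$ is known for every $x$, and in particular on the whole interval.

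\emph{Case 2: the trader sells, i.e.\ $V_t\le B_t$.} This event is observed through $\ind{B_t\ge V_t}$. Every $x\in[B_t,A_t]$ satisfies $x\ge B_t\ge V_t$, so $\ind{V_t\le x}=1$ and $\ind{V_t>x}=0$ on the interval.

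\emph{Case 3: the trader buys, i.e.\ $V_t>A_t$.} This is observed through $\ind{A_t<V_t}$. Every $x\le A_t<V_t$ gives $\ind{V_t>x}=1$ and $\ind{V_t\le x}=0$.

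The trade type is always revealed, since under limit order book feedback the learner observes either $V_t$ or the two indicators. The case split is therefore determined by the feedback, and in each case both $\ell^A_t(x)$ and $\ell^B_t(x)$ follow by substitution.

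The delicate point is the endpoints $x=B_t$ and $x=A_t$, where strict versus weak inequalities matter. The trade rule ($V>a$ buys, $V\le b$ sells) is aligned with the indicators in $u^A_t$ and $u^B_t$, so in Case 2 at $x=B_t$ we get $V_t\le B_t=x$, and in Case 3 at $x=A_t$ we get $V_t>A_t=x$. The inequalities therefore close consistently. I expect checking these boundary conventions, and the possibility that $B_t=A_t$ (when $X_t=Y_t$, which is impossible for internal points since the grids are disjoint, but must be considered at the endpoints $0$ and $1$), to be the main thing to verify. Atoms of $F$ play no role here, because the argument is pathwise in $V_t$.

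Finally, for the endpoints $0$ and $1$, which are reconstructed even when they lie outside $[B_t,A_t]$, observe that $u^A_t(1)=(1-M_t)\ind{V_t>1}=0$ and $u^B_t(0)=M_t\ind{V_t\le 0}=0$ almost surely, using $V_t\le1$ and the assumption that $F$ has no atom at zero. Hence these losses are computable from $M_t$ alone, consistent with setting $Q_t(0)=Q_t(1)=1$.
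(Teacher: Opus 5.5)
Your proof is correct and is essentially the paper's argument: with no trade $V_t$ is revealed, and otherwise the observed trade type gives $V_t\le B_t\le x$ or $x\le A_t<V_t$ for every grid point $x\in[B_t,A_t]$, which together with $M_t$ determines the loss. Your indicator formulation, which recovers both roles at each point, and your check of the weak/strict inequalities at $x=B_t,A_t$ are small additions the paper skips.

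One harmless slip in your side remarks: in the paper $0\in\gG_A$ and $1\in\gG_B$ only, so $X_t\neq Y_t$ always, and the endpoint quantities actually needed are $u^A_t(0)=-M_t$ (a.s., by the no-atom-at-zero assumption) and $u^B_t(1)=M_t-1$, not $u^A_t(1)$ and $u^B_t(0)$ --- these are still functions of $M_t$ alone.
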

Lemma~\ref{lemma:partfeed_reconstruction} follows from the structure of the limit order book feedback. If no trade occurs, the learner receives full feedback. Otherwise, observing whether a purchase or a sale occurred reveals on which side of the bid--ask spread the private value lies. In either case, the learner can determine how the private value compares with each grid point within the spread and, since $M_t$ is always observed, reconstruct the corresponding utility.

Next, Lemma~\ref{lemma:bound_logweights} bounds the ratios between the probability of sampling each grid point and the probability of observing its loss.
These ratios measure how much information the feedback provides relative to how often each point is selected. Since each round reveals losses throughout the interval between the sampled quotes, the sum of these ratios grows only logarithmically with the grid size.
\begin{restatable}{lemma}{boundlogweights}
\label{lemma:bound_logweights}
    For any time step $t \in \{1, \dots, T \}$, with $T \geq 2$, it holds:
    \begin{align}\label{eq:bound_logweights}
    \sum_{x \in  \mathcal{G}_s ,\, s\in \{ A,B\}} \frac{w_t^s(x)}{Q_t(x) + \beta}  \leq 4 \left( \log \left(\frac{1 + \beta}{\beta}\right)\right).
\end{align}
\end{restatable}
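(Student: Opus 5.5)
The plan is to treat the two grids separately, showing that each contributes at most $2\log\left(\frac{1+\beta}{\beta}\right)$. For each grid, I would lower bound $Q_t(x)$ by a cumulative mass of the corresponding sampling distribution. The sum then reduces to the standard telescoping estimate
\[
\sum_i \frac{p_i}{\beta+\sum_{j\le i}p_j}\le \log\left(\frac{1+\beta}{\beta}\right),
\]
valid for any probability vector $(p_i)$. This estimate follows by comparing each term with $\int_{P_{i-1}}^{P_i}\frac{du}{\beta+u}$, where $P_i=\sum_{j\le i}p_j$.

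Fix an internal point $x\in\gG_B$. Since the internal points of $\gG_A$ and $\gG_B$ are disjoint, $Y_t\neq x$ almost surely. Set $\alpha=\mathbb P(Y_t<x)$, so that $1-\alpha=\mathbb P(Y_t>x)$. Also set $C(x)=\mathbb P(X_t\le x)$ and $S(x)=\mathbb P(X_t\ge x)$, both of which include the atom $w_t^B(x)$. Expanding Equation~\eqref{eq:Qderivation} with $b=\mathbb P(X_t<x)$ and $w=w^B_t(x)$ gives
\[
Q_t(x)=\alpha+b-2\alpha b+(1-\alpha)w .
\]
A direct computation then shows that
\[
Q_t(x)-\bigl[(1-\alpha)C(x)+\alpha S(x)\bigr]=\alpha w\ge 0 .
\]
Hence $Q_t(x)$ dominates a convex combination of $C(x)$ and $S(x)$, and in particular $Q_t(x)\ge\min\{C(x),S(x)\}$. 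It follows that
\[
\frac{w_t^B(x)}{Q_t(x)+\beta}\le\frac{w_t^B(x)}{C(x)+\beta}+\frac{w_t^B(x)}{S(x)+\beta}.
\]
The same bound holds for the endpoint $1\in\gG_B$, where $Q_t=1\ge\min\{C,S\}$ trivially. The argument for $\gG_A$ is symmetric: swap the roles of $X_t$ and $Y_t$ and use the endpoint $0$.

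Summing over $x\in\gG_B$ in increasing order, the first terms form exactly the telescoping sum above with $p_i=w^B_t(x_i)$ and cumulative masses $C(x_i)$. The second terms give the same sum taken in decreasing order, with cumulative masses $S(x_i)$. Each sum is therefore at most $\log((1+\beta)/\beta)$, so the contribution of $\gG_B$ is at most $2\log((1+\beta)/\beta)$. The contribution of $\gG_A$ obeys the same bound, and the total is $4\log((1+\beta)/\beta)$, as claimed.

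I expect the main obstacle to be the algebraic lower bound on $Q_t(x)$. The naive bound $Q_t(x)\ge w^B_t(x)$ only yields a sum of order $K$, whereas the key is to recover a full cumulative mass. The point requiring care is the handling of atoms at $x$ and the use of the disjointness of the interleaved grids, which is what makes $\mathbb P(Y_t=x)=0$. The endpoints must also be checked so that they do not add an extra constant outside the logarithm. With them folded into the cumulative chain, no such constant appears.
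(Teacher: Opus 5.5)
Your proof is correct, but it takes a different route from the paper's.

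One small correction first. The difference $Q_t(x)-\bigl[(1-\alpha)C(x)+\alpha S(x)\bigr]$ is exactly $0$, not $\alpha w$: expanding $(1-\alpha)(b+w)+\alpha(1-b)$ gives precisely $\alpha+b-2\alpha b+(1-\alpha)w$. This is just the law of total probability over whether $Y_t<x$ or $Y_t>x$. You only use the inequality $\ge 0$, so nothing downstream breaks.

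The paper's proof merges $\gG_A\cup\gG_B$ into a single ordered list $z_1,\dots,z_N$ and works with the CDF of the mixture, $S_t(z_i)=\frac12\sum_{j\le i}w_t(z_j)$. It then proceeds in three steps:
\begin{itemize}
\item It bounds both products in Equation~\eqref{eq:Qderivation} by AM--GM, $pq\le((p+q)/2)^2$, which gives $Q_t(z_i)\ge S_t(z_i)\bigl(1-S_t(z_{i-1})\bigr)$.
\item It splits this product using $\frac{1}{ab+\beta}\le\frac{1}{a+\beta}+\frac{1}{b+\beta}$, valid when $a+b\ge 1$.
\item It runs one telescoping integral over the merged grid, with the factor $2$ coming from $w_t(z_i)=2(S_t(z_i)-S_t(z_{i-1}))$.
\end{itemize}

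Your per-grid argument avoids both the AM--GM step and the product-splitting inequality. It also yields a sharper structural fact. The observation probability of a point on one grid is at least $\min\{C(x),S(x)\}$, which depends only on that grid's own sampling distribution, whatever the other instance does. Disjointness of the grids is not even essential: any mass of $Y_t$ at $x$ only increases $Q_t(x)$.

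The paper's version handles both grids in one pass through the mixture CDF, at the cost of those two extra inequalities. Both arguments arrive at the same constant $4$.
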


Figure~\ref{fig:feedback-mass} illustrates the result of Lemma~\ref{lemma:bound_logweights}. Specifically, reconstructing the losses over the interval $[B_t,A_t]$ allows us to bound the term measuring the cost of the limit order book feedback by $\gO(\log K)$. This lies between the full-feedback setting, where every loss is observed and the corresponding term is $\gO(1)$, and the bandit setting, where only the loss of the sampled action is observed and the term scales as $\gO(K)$.

We are now ready to show the regret guarantees of Algorithm~\ref{alg:MM_partialfeed_exp3}. 

\begin{restatable}{theorem}{partialregrettheorem}
\label{th:partialregrettheorem}
    Let $\delta \in (0,1)$ and $T \geq 2$. By selecting $\eta=\beta=\sqrt{1/T}$, Algorithm~\ref{alg:MM_partialfeed_exp3} instantiated with $K=\lceil \sqrt{T}\rceil$ attains:
    \[
    R_T =  \gO \left(\sqrt{T} \log\left( \frac{T}{\delta}\right)\right),
    \]
    with probability at least $1-\delta$.
\end{restatable}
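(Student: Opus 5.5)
The plan is to split the regret into three pieces: a discretization term, a swap term, and the regret of the two EXP3-IX instances. The main work is in the third piece. First, a discretization step analogous to Lemma~\ref{lemma:pre_regretOMDreduction}. We want a pair $(A^*,B^*)\in\gG_A\times\gG_B$ with $B^*\le A^*$ and $T\sup_{\gU}U - T\,U(A^*,B^*) = \gO(T/K)$. The argument is the same inward rounding: move the ask down and the bid up to the nearest point of the shifted grids, without crossing $\mu_T$. The grids are offset by $1/(4K)$ and $1/(2K)$, so consecutive points are still within $\gO(1/K)$; the endpoints $\{0,1\}$ handle the boundary cases. This costs $\gO(\sqrt T)$ for $K=\lceil\sqrt T\rceil$.

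Second, Lemma~\ref{lemma:order_the_game} gives $U_t(A_t,B_t)\ge U_t(Y_t,X_t)=U^A_t(Y_t)+U^B_t(X_t)$, where $U^s_t$ are the expected components. The regret is therefore at most the sum of two one-dimensional regrets, one for $Y_t\sim w^A_t$ against $A^*$ and one for $X_t\sim w^B_t$ against $B^*$, plus the discretization term. We pass from expected utilities $U_t$ to realized losses $\ell^s_t$ with Azuma--Hoeffding. The adaptive adversary chooses $M_t$ from the past, $V_t$ is independent of the past, and the increments are bounded, so this costs $\gO(\sqrt{T\log(1/\delta)})$. A second martingale step replaces $\ell^s_t(Y_t)$ by $\sum_x w^s_t(x)\ell^s_t(x)$.

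Third, we analyze each instance as EXP3-IX with a graph-type observation structure. By Lemma~\ref{lemma:partfeed_reconstruction}, the estimator in Equation~\eqref{def:ellcap_int} is computable, and $\mathbb E_t[\hat\ell^s_t(x)] = \ell^s_t(x)\,Q_t(x)/(Q_t(x)+\beta)\le\ell^s_t(x)$. Two points need care here. The event $O_t(x)$ depends on both samples $X_t,Y_t$, and these are drawn independently given the past, which is exactly how $Q_t$ in Equation~\eqref{eq:Qderivation} is computed. Also, the losses $\ell^s_t$ are random through $V_t$, but $V_t$ is independent of the sampling, so we condition on $(M_t,V_t)$. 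The standard exponential-weights potential argument, using $e^{-z}\le 1-z+z^2/2$ and nonnegative estimates, gives
\[
\sum_t\sum_x w^s_t(x)\hat\ell^s_t(x)-\sum_t\hat\ell^s_t(x^*)\le \frac{\log(K+1)}{\eta}+\frac{\eta}{2}\sum_t\sum_x w^s_t(x)\hat\ell^s_t(x)^2 .
\]
We bound the second-moment term by $\sum_t\sum_x w^s_t(x)\hat\ell^s_t(x)$, using $\hat\ell^2\le\hat\ell/(Q+\beta)$ together with Lemma~\ref{lemma:bound_logweights}, or the implicit-exploration concentration. This gives $\gO(\eta T\log(1/\beta))$ in conditional expectation. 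Summed over both instances, Lemma~\ref{lemma:bound_logweights} bounds the relevant sum by $4\log((1+\beta)/\beta)=\gO(\log T)$.

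The main obstacle is turning these bounds into high-probability statements against an adaptive adversary without losing more than logarithmic factors. For the comparator side, I would use Neu's implicit-exploration lemma: with probability $1-\delta$, $\sum_t(\hat\ell^s_t(x)-\ell^s_t(x))\le\log(1/\delta)/(2\beta)$ simultaneously for all $x$ after a union bound over the $K+1$ points. Its proof uses $\hat\ell\le \frac{1}{2\beta}\log(1+2\beta\tilde\ell)$ with $\tilde\ell=O_t\ell/Q_t$. Here $O_t(x)$ is observed with probability exactly $Q_t(x)$, so the supermartingale argument goes through unchanged. For the learner side, $\sum_x w^s_t(x)\hat\ell^s_t(x)$ has conditional mean at least $\sum_x w^s_t\ell^s_t-\beta\sum_x w^s_t/(Q_t+\beta)$. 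Its increments are bounded by $\sum_x w^s_t/(Q_t+\beta)=\gO(\log T)$, so Freedman or Azuma gives $\gO(\log T\sqrt{T\log(1/\delta)})$, and the bias term is $\beta T\log T$. With $\eta=\beta=1/\sqrt T$, the terms are: $\log K/\eta=\gO(\sqrt T\log T)$, the variance term $\eta T\log T$, the bias term, the concentration term $\log(K/\delta)/\beta=\gO(\sqrt T\log(T/\delta))$, and the martingale terms. Together they sum to $\gO(\sqrt T\log(T/\delta))$, as claimed. The losses are halved and shifted utilities, so converting back only changes constants.
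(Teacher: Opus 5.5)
Your plan is correct and follows essentially the paper's route. Both arguments use the swap via Lemma~\ref{lemma:order_the_game}, an inward-rounding reduction to $\gG_A\times\gG_B$ costing $\gO(T/K)$, and the exponential-weights potential bound. Both then use Lemma~\ref{lemma:bound_logweights} for the implicit-exploration bias and the second-moment term, an implicit-exploration supermartingale (union-bounded over the grid) for the comparator, and a martingale bound on the learner side. The only structural difference is that the paper skips your detour through realized losses: it compares $\hat\ell^s_t$ directly with the $V$-averaged losses $\bar\ell^s_t$, using the tower rule inside the supermartingale.

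Two accounting points need tightening to land exactly on $\gO(\sqrt T\log(T/\delta))$.

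First, the range-based Azuma bound $D\sqrt{T\log(1/\delta)}$ that you quote for $\sum_t(\mathbb E[Z_t\mid\gH_{t-1}]-Z_t)$ is not of that order. It exceeds the target by a factor of order $\sqrt{\log T}$ when $\log(1/\delta)\approx\log T$. You need the variance-sensitive version instead: use $Z_t\ge 0$ and $\mathbb E[Z_t^2\mid\gH_{t-1}]\le D\,\mathbb E[Z_t\mid\gH_{t-1}]\le 2D$, which gives $\gO(\sqrt{DT\log(1/\delta)})$.

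Second, $\frac{\eta}{2}\sum_t Y_t$ must be bounded with high probability, not only in conditional expectation. The step ``bound $Y_t$ by $Z_t$'' is false, because $\hat\ell^s_t(x)$ can be of order $1/\beta$. Freedman's inequality with $0\le Y_t\le D/\beta$ and $\mathbb E[Y_t\mid\gH_{t-1}]\le D$ suffices, and the resulting deviation is lower order after multiplying by $\eta$. The paper handles both issues at once: it concentrates the nonnegative quantity $Z_t-\frac{\eta}{2}Y_t\in[0,D]$ jointly, where only the lower tail is needed, and obtains $2\sqrt{DT\log(3/\delta)}$.
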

Theorem~\ref{th:partialregrettheorem} improves upon the bound of \citet{Maran2026OnlineMM} by establishing a high-probability regret bound of $\widetilde{\gO}(\sqrt{T})$, which is optimal up to logarithmic factors and holds even against an adaptive adversary.

\section{An impossibility result}\label{sec:LB_fullfeedMM}
We finally consider the full-feedback setting under the \textbf{(adv)+(adv)} case. We establish a linear regret lower bound, showing that market making is unlearnable in a fully adversarial environment, even when the entire valuations are observed at the end of each round.

This is done in the following theorem.
\begin{restatable}{theorem}{LBforMM}
    \label{th:adv_LB}
        Consider the market making problem with adversarial sequence of $\{M_t\}_t$ and $\{ V_t\}_t$ and full feedback. Then, for any algorithm $\mathcal A$, there exists a sequence $\{ (M_t,V_t)\}_{t=1}^T$ where  expected regret $R_T \geq \frac{T}{6}$. 
\end{restatable}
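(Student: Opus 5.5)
The plan is to use Yao's principle. I will build a \emph{random} oblivious sequence $\{(M_t,V_t)\}_{t=1}^T$ and show two things. First, any deterministic algorithm, which only sees the past, earns expected utility at most some constant $c_1$ per round. Second, for every realization of the sequence there is a fixed pair $(a^\star,b^\star)\in\gU$ that earns at least $c_2$ per round, with $c_2-c_1\ge 1/6$. Averaging then gives a deterministic sequence on which every algorithm has regret at least $T/6$. Full feedback does not help, because the information needed at round $t$ is a fresh random bit $\varepsilon_t$ that is revealed only after the quotes are posted.

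The construction exploits the discontinuity of $u^A_t(a)=(a-M_t)\ind{V_t>a}$ and $u^B_t(b)=(M_t-b)\ind{V_t\le b}$ at $a=V_t$ and $b=V_t$. I will keep a nested sequence of intervals $I_1\supset I_2\supset\cdots$, with $I_1$ a subinterval of $[0,1]$, in the style of the adversarial lower bounds for bilateral trade. At round $t$, let $I_t=[l_t,r_t)$ with midpoint $c_t$. Draw $\varepsilon_t\in\{0,1\}$ uniformly, independently of the past, and set $I_{t+1}$ to the left or right half accordingly. The pair $(M_t,V_t)$ is chosen as a function of $(l_t,c_t,r_t,\varepsilon_t)$ so that two properties hold.
\begin{itemize}
\item[(i)] Every ask $a$ (and bid $b$) in the surviving half gets a utility bounded below by a constant. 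For instance, $V_t$ sits just at the far endpoint of the surviving half, so the ask still trades, while $M_t$ is set well below it.
\item[(ii)] For a quote $a$ fixed before $\varepsilon_t$ is revealed, the two choices of $\varepsilon_t$ penalize complementary regions. If $a$ lies in the discarded half, or outside $I_t$ altogether, the trade either fails (because $V_t\le a$) or executes at a loss (because $M_t>a$). Averaged over $\varepsilon_t$, the learner therefore gets at most $c_1$.
\end{itemize}
I will use both sides of the book so that no quote is uniformly safe. In an ``$\varepsilon_t=0$'' round, $M_t$ is high and $V_t=1$, which punishes every ask below $c_t$. In an ``$\varepsilon_t=1$'' round, $V_t=c_t$ and $M_t$ is low, which rewards asks below $c_t$ and gives zero to asks at or above $c_t$. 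The bid side is handled by sending $V_t$ above every relevant bid, so that the bid never trades, or symmetrically.

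The fixed comparator is any ask $a^\star\in\bigcap_t I_t$, which is nonempty because the intervals are nested and half-open, together with the trivial bid $b^\star=0$. For $a^\star$, property (i) holds at every round, so $\sum_t u_t(a^\star,0)\ge c_2T$ deterministically. For the learner, $(A_t,B_t)$ is measurable with respect to $\varepsilon_1,\dots,\varepsilon_{t-1}$. Conditionally on the past, property (ii) therefore gives $\mathbb E[u_t(A_t,B_t)]\le c_1$, and summing over $t$ finishes the argument.

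The main obstacle is calibrating the values of $M_t$ and $V_t$ so that both properties hold simultaneously and uniformly in $t$, with constants giving $c_2-c_1\ge 1/6$. The difficulty is that the learner's candidate quotes just below $c_t$, or at the boundary of $I_t$, must not collect the reward intended for the surviving half in both branches. I would first try to fix the shape of each round to the two branches above, with $M_t\in\{0,\tfrac12\}$-type values shifted relative to $I_t$. I would then optimize the constants by a small linear program over the regions ``left half'', ``right half'' and ``outside $I_t$''. I would also check the tie conventions ($V_t>a$ versus $V_t\le b$) so that $a^\star$ never sits exactly on a point where $V_t$ is placed.
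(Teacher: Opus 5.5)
\paragraph{There is a genuine gap.} Your proposal never fixes $(M_t,V_t)$ and never verifies $c_2-c_1\ge 1/6$; what you call the main obstacle is the whole proof. Worse, the design you commit to cannot be completed. Take the comparator $(a^\star,0)$ with property (i) at every round:
\begin{itemize}
\item If the left half $[l_t,c_t)$ survives, (i) forces $V_t\ge c_t$ (every ask there must trade) and $M_t\le l_t-c_2$.
\item If the right half $[c_t,r_t)$ survives, (i) forces $V_t\ge r_t$ and $M_t\le c_t-c_2$.
\end{itemize}
Now consider the deterministic learner that posts $(A_t,B_t)=(c_t-\delta_t,0)$. This is legitimate, since $c_t$ depends only on $\varepsilon_1,\dots,\varepsilon_{t-1}$. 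In both branches its ask trades and its bid does not, because $V_t\ge c_t>0$. It therefore earns $c_t-\delta_t-M_t\ge u_t(a^\star,0)-|I_t|/2-\delta_t$. Since $|I_t|$ decays geometrically, its cumulative deficit against $(a^\star,0)$ is $O(1)$, so no $c_1<c_2$ can satisfy (ii).

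The obstruction is structural. On the ask side, every quote below $V_t$ trades and $a\mapsto a-M_t$ has slope one. So a single round cannot pay a constant to $[c_t,r_t)$ while punishing asks just below $c_t$. Your own $\varepsilon_t=0$ branch shows this: punishing all asks below $c_t$ requires $M_t\ge c_t$, which leaves the surviving half at most $|I_t|/2$ and violates (i).

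\paragraph{The missing idea.} The comparator must collect its reward on \emph{different sides of the book} in different rounds. $M_t$ should sit at an extreme value and $V_t$ at an endpoint of the surviving interval. Then a quote chosen without knowing the direction either fails to trade or trades on the losing side.

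The paper does this as follows. Intervals start at $(L_0,U_0)=(1/3,2/3)$ and shrink to their left or right third. An independent coin sets $(V_t,M_t)=(L_t,1)$ or $(U_t,0)$. The comparator $(U_T-\varepsilon,L_T)$ sells at $L_T$ in the first case (as $L_t\le L_T$) and buys at $U_T-\varepsilon$ in the second (as $U_t\ge U_T$). It earns $(1+D_T-\varepsilon)/2\ge 1/2$ per round. A quote pair fixed before round $t$ faces four equally likely outcomes, $(L_{t-1},1)$, $(y,1)$, $(x,0)$ and $(U_{t-1},0)$. A case analysis caps its expected utility at $1/3$, which yields $T/6$.

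Your halving scheme can be repaired in the same spirit:
\begin{itemize}
\item take $I_1=[1/3,2/3)$ and $V_t=c_t$ in every round;
\item set $M_t=0$ when the left half survives and $M_t=1$ when the right half survives;
\item use the comparator $(a^\star,a^\star)$.
\end{itemize}
The comparator earns $a^\star$ or $1-a^\star$, hence at least $1/3$ every round. Any $(a,b)$ earns in expectation $(a-\tfrac12)\mathbb{I}\{a<c_t\}+(\tfrac12-b)\mathbb{I}\{b\ge c_t\}\le|c_t-\tfrac12|\le\tfrac16$.
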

The proof of Theorem~\ref{th:adv_LB} employs the well-known Yao's minimax principle~\citep{4567946}. Specifically, we build a distribution over problem instances in which the relevant price interval shrinks from either the left or the right at random in each round. The learner cannot anticipate which part of the interval will remain, whereas a single fixed action chosen in hindsight can be selected using the final interval. This uncertainty creates a constant gap between the learner's expected utility and that of the hindsight benchmark in each round, yielding a linear regret lower bound. The full proof is given in Appendix~\ref{app:LB_fullfeedMM}.

\bibliography{example_paper}
\bibliographystyle{plainnat}

\appendix
\section*{Appendix}
The Appendix is organized as follows:
\begin{itemize}
    \item In Appendix~\ref{app:advstoc_fullfeed}, we provide the omitted proofs and lemmas of Section~\ref{sec:advstoc_fullfeed}.
    \item In Appendix~\ref{app:advstoc_LOB}, we provide the omitted proofs and lemmas of Section~\ref{sec:advstoc_LOB}.
    \item In Appendix~\ref{app:LB_fullfeedMM}, we prove the lower bound stated in Section~\ref{sec:LB_fullfeedMM}.
\end{itemize}

\section{Warm up}\label{app:advstoc_fullfeed}

We start by proving that, by ordering the bid and ask prices, the mean utility cannot reduce.

\lemmaOderTheGame*
\begin{proof}
If $X_t=Y_t$ then the statement is trivially proved, so we assume $X_t \neq Y_t$.
If $X_t < Y_t$, then $u_t(Y_t,X_t) = u^B_t(B_t)+u_t^A(A_t)$ and $(X_t-Y_t)_+=0$ the claim is true. 

If instead $X_t > Y_t$ and $V_t \neq \{Y_t,X_t\}$, then we have that for each case $V_t < Y_t <X_t$, $V_t \in (Y_t,X_t)$ and $ Y_t <X_t < V_t$ we have an improvement of $(X_t-Y_t)_+$ in the utility by swapping $X_t$ with $Y_t$. 
For example, if $V_t < Y_t <X_t$, then $u_t(X_t,Y_t)=u_t^A(X_t) + u_t^B(Y_t)=(M_t-Y_t) = u_t(Y_t,X_t) + (X_t-Y_t)_+$ and a similar reasoning can be applied to the other cases.

In the cases where $V_t \in \{A_t,B_t\}$, if $V_t = X_t$ and $X_t > Y_t$, then $u_t(Y_t,X_t)=(Y_t-X_t)$ and $u_t(X_t,Y_t)=0$, so the statement still holds.
Lastly, if $V_t = Y_t$ and $X_t > Y_t$, $u_t(Y_t,X_t)=(M_t-X_t)$ and $u_t(X_t,Y_t)=(M_t-Y_t)$, by which we conclude the first part of the proof.

By observing that $(X_t-Y_t)_+ \geq 0$, we take the expected value of the previously proved equality, yielding $U_t(A_t,B_t)\geq U_t(Y_t,X_t)$. Finally, by taking the empirical mean over the sequence $\{M_t\}_{t=1}^T$, we get $U(A_t,B_t)\geq U(Y_t,X_t)$.
\end{proof}

We now turn to the intermediate Lemma~\ref{lemma:pre_regretOMDreduction}, which justifies the reduction to a finite set of actions in Algorithm~\ref{alg:Hedge_onthegrid}, by showing explicitly that there exists an almost optimal point on the grid, even though the utility is not regular.
\preregretOMDreduction*
\begin{proof}
We define a point on $(A^*,B^*) \in\gG \times \gG$ and then prove that such point has utility $2/K$ close to the optimal action $(a^*,b^*)\in \gU$. While the supremum is not necessarily attained for the utility, we select $0 <\varepsilon < \frac{1}{2K}$ and a point such that $U(a^*,b^*) \geq \sup_{(a,b)\in \gU}U(a,b)-\varepsilon$. 

Let us first notice that we can reduce to the case where $b^* \leq \mu_T \leq a^*$. 
This is because of the structure of our utility function: if $a^*< \mu_T$, we have $\left(a^*-\mu_T)(1-F(a^*)\right)\leq 0$, so we would weakly improve our revenue by substituting $a^*$ with $\mu_T$: analogously, we can argue that $b^* \leq \mu_T$
We select the point $(A^*,B^*)$ following the same order, using the same logic. 

Let $(A^*,B^*) \in\gG \times \gG$ be defined as:
\begin{align}\label{eq:discr_optexp_stocstoc}
    B^*=
\begin{cases}
\displaystyle \frac{\lceil Kb^*\rceil}{K},
&
\text{if }\frac{\lceil Kb^*\rceil}K\leq\mu_T,\\[10pt]
\displaystyle \frac{\lfloor K\mu_T\rfloor}{K},
&
\text{otherwise}.
\end{cases} \qquad
A^*=
\begin{cases}
\displaystyle \frac{\lfloor Ka^*\rfloor}{K},
&
\text{if }\frac{\lfloor Ka^*\rfloor}{K}\geq\mu_T,\\[10pt]
\displaystyle \frac{\lceil K\mu_T\rceil}{K},
&
\text{otherwise}.
\end{cases}
\end{align}

 We decouple the contributions from the ask and the bid components, only studying the bid case, as the other follows from a similar argument: we then we sum up the additional contribution of $\varepsilon$ to measure the difference with the supremum.
 Suppose first that the first case holds and
$
B^*=\frac{\lceil K b^*\rceil}{K}\leq\mu_T.
$
Since $F$ is non decreasing and bounded in $[0,1]$, $F(B^*)\geq F(b^*)$.
Therefore, $ (\mu_T-B^*)F(B^*) \geq(\mu_T-B^*)F(b^*)=(\mu_T-b^*)F(b^*)-(B^*-b^*)F(b^*)\geq (\mu_T-b^*)F(b^*)-\frac{1}{K}.$

Suppose instead that $B^*=\frac{\lfloor K\mu_T\rfloor}{K}.$ In this case the first rounding point lies above $\mu_T$, which implies
$0\leq\mu_T-b^*<\frac{1}{K}.$ Since $F(b^*)\leq1$, $
0\leq (\mu_T-b^*)F(b^*)<\frac{1}{K}.$
On the other hand, $B^*\leq\mu_T$, so $(\mu_T-B^*)F(B^*)\geq0$.
Thus again $(\mu_T-b^*)F(b^*)-(\mu_T-B^*)F(B^*)\leq\frac{1}{K}$.

Hence, in all cases, $(\mu_T-b^*)F(b^*)-(\mu_T-B^*)F(B^*)\leq\frac{1}{K}$, which concludes the argument, by summing up the two contributions from bid and ask prices.
We conclude the argument by observing that:
$$\sup_{(a,b)\in \gU}U(a,b) - U(A^*,B^*) = \sup_{(a,b)\in \gU}U(a,b) - U(a^*,b^*) + U(a^*,b^*) - U(A^*,B^*)\leq \frac{5}{2K}.$$
This concludes the proof.
\end{proof}

After proving that there exists an almost-optimal point with respect to $K$ on the grid $\gG \times \gG$, we finally turn to proving our regret bound.

\Thregretfull*
\begin{proof}
Using Lemma~\ref{lemma:pre_regretOMDreduction} and the regret decomposition in Equation~\eqref{eq:meanregret_def}, we have:
$$R_T \leq \frac{5T}{2K} + \max_{(A^*,B^*)\in \gG \times \gG} \sum_{t=1}^T \left(U_t(A^*,B^*) - U_t(A_t,B_t)\right).$$ 
We first bound the difference between the expected utility for a fixed action $(A^*,B^*)$ and the corresponding observed utility. 
We observe that:
\begin{align*}
    &\sum_{t=1}^T \left(U_t(A^*,B^*) - u_t(A^*,B^*)\right)\\ & \mkern50mu=\sum_{t=1}^T \left[(M_t - B^*) \left(F(B^*)- \ind{V_t \leq B^*} \right) + (A^*-M_t)\left( 1-F(A^*) - \ind{V_t > A^*}\right) \right].
\end{align*}
The two terms of the sum have conditional range $|M_t -B^*|$ and $|A^* - M_t|$ bounded in $1$ and form a martingale sum. 
 Thus, using the conditional Azuma-Hoeffding inequality \citep{10.2748/tmj/1178243286} and a union bound for the two terms, we find that with probability $1- \nicefrac{\delta}{3}$ the term is bounded by $ \sqrt{2T \log(\frac{3(K+1)}{\delta})}$.
 As it holds for every action $(A,B) \in \gG \times \gG$, we can select an optimal action in hindsight $(A^*,B^*) \in \argmax_{(A,B)\in \gG \times \gG} \sum_{t=1}^TU_t(A,B)$ and obtain the same bound.
Let us now consider the action $(Y_t,X_t)$ selected at each time step $t$ by the Hedge algorithms. 
By Azuma-Hoeffding inequality \citep{10.2748/tmj/1178243286} we have that with probability $1- \nicefrac{\delta}{3}$:
$$
 \sum_{t=1}^T \left(u_t(Y_t,X_t) - U_t(Y_t,X_t) \right)\leq \sqrt{2T \log(\frac{3}{\delta})}.
$$

We can rewrite the regret on the grid actions as: 
\begin{align*}
    \sum_{t=1}^T &\Big(U_t(A^*,B^*) - U_t(A_t,B_t)\Big) \leq \sum_{t=1}^T \Big( U_t(A^*,B^*) - U_t(Y_t,X_t)\Big) \\
    & = \sum_{t=1}^T \Big[u_t(A^*, B^*) - u_t(Y_t, X_t)\Big] +  \sum_{t=1}^T\Big[U_t(A^*,B^*) - u_t(A^*,B^*)\Big] + \sum_{t=1}^T\Big[u_t(Y_t,X_t) - U_t(Y_t, X_t)\Big],
\end{align*}
where the inequality holds by applying Lemma~\ref{lemma:order_the_game}, while the equality holds because the utility can be decomposed into separate components. 
We notice here that we use in the proof the points $\{(X_t,Y_t)\}_{t=1}^T$ selected in the original instances of Hedge for the bid and ask, so that the standard concentration argument can be applied, for example using \citet{NIPS2014_24402144}. 
Thus, observing that both instances of Hedge have the same learning rate $\eta>0$, we have that with probability $1 - \nicefrac{\delta}{3}$, using the decomposition into two separate problems: 
\begin{align*}
    \sum_{t=1}^T \Big[u_t(A^*, B^*) - u_t(Y_t, X_t)&\Big] = \sum_{t=1}^T \Big[u_t^A(A^*) - u_t^A(Y_t)\Big] + \sum_{t=1}^T \Big[ u_t^B(B^*) - u_t^B(X_t)\Big]\\
    &\leq \frac{2\log(K+1)}{\eta}+ \eta\, T + 2\sqrt{T \log\left( \frac{3}{\delta}\right)} ,
\end{align*} 

Building on that, by choosing $K = \lceil \sqrt{T}\rceil$ and $\eta= \sqrt{\frac{\log(T)}{T}}$, we get the second result $R_T = \gO (\sqrt{T \log(T)} + \sqrt{T \log(1/\delta)}+ \log(1/\delta))$ with probability $1-\delta$ by summing the failure probabilities, which concludes the argument. 
\end{proof}

\section{Optimal bounds for LOB}\label{app:advstoc_LOB}

\partfeedreconstruction*
\begin{proof}
    If $V_t \in (B_t,A_t]$ the statement follows trivially by the fact that we observe $(M_t,V_t)$ entirely. 
 
    If a purchase occurred and $x \in \gG_B$, then $V_t \leq B_t \leq x$: therefore $u^B_t(x)=M_t-x$. 
    If instead we consider a point on the other grid $y \in \gG_A \cap [B_t,A_t]$, then $V_t \leq B_t \leq y$ and $u_t^A(y)=0$: therefore we can again reconstruct the feedback.
    The same reasoning can be mirrored in case of a sale, thus we conclude the proof.  
    We notice here that we can reconstruct the revenue independently of the $V_t$ distribution, as it only depends on the structure of the utility and the feedback. 
\end{proof}

After proving the consistency of our estimators, we now move on to the proof of Lemma~\ref{lemma:bound_logweights}.
This lemma presents the first step for bounding in the regret analysis. 

\boundlogweights*
\begin{proof}
    For the sake of this proof, we fix a unique indexing on the union of the two grids $\mathcal{G}_A \cup \mathcal G_B$, with a total of $2(K+1)$ points, denoted as $z_1 \leq \dots \leq z_N$. 
    We can therefore call the probability weights with the unique name of $w_t$ for simplicity of notation, where $w_t(z_i)=w_t^B(z_i)$ if $z_i \in \gG_B$ and $w_t(z_i)=w_t^A(z_i)$ if $z_i \in \gG_A$.
     
     Let $S_t(z_i) = \frac{1}{2}\sum_{j \leq i}{w_t(z_j)}$ be the normalized sum of the probabilities in each point on the grid until point $z_i \in \gG_A \cup \gG_B$. 
    Recall Equation~\eqref{eq:Qderivation}: we observe that, selecting a point $z_i \in \gG_A \cup \gG_B$, it holds that $\sum_{j < i}w_t(z_j) = 2S_{t}(z_{i-1})$ and $\sum_{j > i}w_t(z_j) = 2 (1-S_t(z_i))$ with our new notation.  
    Thus we have that, from Identity~\eqref{eq:Qderivation}:
    \begin{align*}
            Q_t(z_i) &\geq 1-S_{t}^2(z_{i-1})- (1-S_{t}(z_i))^2
            \\
            &= 2S_{t}(z_i)(1-S_{t}(z_{i-1}))-\left(\frac{w_t(z_i)}{2}\right)^2
            \\&\geq S_{t}(z_i)(1-S_{t}(z_{i-1})).
    \end{align*}
     By using the standard analysis inequality $ab \leq \nicefrac{(a+b)^2}{4}$ and the fact that: \[S_t(z_i)\left(1-S_{t}(z_{i-1}) \right) \geq \left(\frac{w_t(z_i)}{2}\right)^2,\] since $\left(\frac{w_t(z_i)}{2}\right) = S_t(z_i)-S_t(z_{i-1})$ is no larger than both $S_t(z_i)$ and $1-S_t(z_{i-1})$.

   We use these inequalities to construct our bound on the importance sampling term:
    \begin{align*}
            \sum_{i=1}^N \frac{w_t(z_i)}{Q_t(z_i) + \beta} &\leq \sum_{i=1}^N \frac{w_t(z_i)}{S_t(z_i)(1-S_t(z_{i-1})) + \beta} \\
            &\leq \sum_{i=1}^N \frac{w_t(z_i)}{S_t(z_i) + \beta} + \sum_{i=1}^N \frac{w_t(z_i)}{1-S_{t}(z_{i-1}) + \beta} 
            \\
            &=2\sum_{i=1}^N \frac{S_t(z_i) -  S_t(z_{i-1})}{S_t(z_i) + \beta} + 2\sum_{i=1}^N \frac{S_t(z_i) -  S_t(z_{i-1})}{1-S_{t}(z_{i-1}) + \beta}
            \\
            &\leq 4\int_{0}^1 \frac{d\,S}{S+ \beta} = 4\left(\log\left(\frac{1+\beta}{\beta}\right) \right).
    \end{align*}

  Which holds by applying definitions and using that $1/(ab)\leq 1/a + 1/b$ with $a+b\geq 1$ and standard inequalities from analysis. 

     We define for the following proofs the factor in Equation~\eqref{eq:bound_logweights} as: $$D\coloneqq 4\left(\log\left(\frac{1+\beta}{\beta}\right) \right),$$ to simplify later derivations. 
\end{proof}

Before proving Theorem~\ref{th:partialregrettheorem}, let us state some intermediate steps Lemma~\ref{lemma:pre_expregret}, Lemma~\ref{lemma:HP_diffloss} and Lemma~\ref{lemma:aggr_auxvariables}, providing high probability bounds on the conditional mean and variance of the estimated losses at each time step, given the data from the previous rounds $\gH_{t-1}=\{(A_1,B_1),(V_1,M_1), \dots, (V_{t-1},M_{t-1})\}$. 

In the following Lemma~\ref{lemma:pre_expregret}, we bound the cumulative difference between the estimated loss of the two current algorithms, denoted as $Z_t$, and that of any fixed bid--ask pair $(a,b)\in\gG_A\times\gG_B$, up to the usual logarithmic and second-order terms, denoted as $Y_t$.

\begin{lemma}\label{lemma:pre_expregret}
Let $Z_t$, $Y_t$ be defined as:
\[
Z_t = \sum_{x \in \gG_s,\, s\in \{ A,B\}} w^s_t(x)\hat \ell_t^s(x), \qquad Y_t = \sum_{x \in \gG_s,\, s\in \{ A,B\}} w^s_t(x)(\hat \ell_t^s(x))^2.
\]

Then we have that, for any $b \in \gG_B,\, a \in \gG_A$:
\[
\sum_{t=1}^T\left( Z_t - \left(\hat \ell_t^B(b)+  \hat \ell_t^A(a) \right) \right) \leq \frac{2\log(K+1)}{\eta} + \frac{\eta}{2} \sum_{t=1}^T Y_t.
\]   
\end{lemma}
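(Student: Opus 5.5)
This is the standard exponential-weights potential argument, applied separately to each of the two instances and then summed. Fix $s\in\{A,B\}$ and let $\Phi_t^s=\log\sum_{x\in\gG_s}W_t^s(x)$, with $\Phi_1^s=\log|\gG_s|=\log(K+1)$, since each grid has $K$ internal points plus one endpoint.

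\textbf{Lower bound on the final potential.} For any fixed comparator $c\in\gG_s$, since all weights are positive,
\[
\Phi_{T+1}^s \geq \log W_{T+1}^s(c) = -\eta\sum_{t=1}^T\hat\ell_t^s(c).
\]

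\textbf{Upper bound on each increment.} We have
\[
\Phi_{t+1}^s-\Phi_t^s=\log\sum_{x\in\gG_s} w_t^s(x)\exp\bigl(-\eta\hat\ell_t^s(x)\bigr).
\]
The estimators are nonnegative: $\ell_t^s\in[0,1]$ because $u_t^s\in[-1,1]$, and $O_t(x)\ge0$, $Q_t(x)+\beta>0$. So we may use $e^{-y}\le 1-y+y^2/2$ for $y\ge0$, and then $\log(1+z)\le z$. This gives
\[
\Phi_{t+1}^s-\Phi_t^s\le -\eta\sum_{x} w_t^s(x)\hat\ell_t^s(x)+\frac{\eta^2}{2}\sum_x w_t^s(x)\bigl(\hat\ell_t^s(x)\bigr)^2 .
\]
Nonnegativity is the one point to verify, since the quadratic exponential bound fails for negative arguments. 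It is guaranteed here because losses were rescaled to $[0,1]$ and no bias term is subtracted. No bound on $\eta\hat\ell$ is required.

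\textbf{Combining.} Telescoping over $t$ and comparing with the lower bound, then dividing by $\eta$, gives
\[
\sum_{t=1}^T\Bigl(\sum_x w_t^s(x)\hat\ell_t^s(x)-\hat\ell_t^s(c)\Bigr)\le\frac{\log(K+1)}{\eta}+\frac{\eta}{2}\sum_{t=1}^T\sum_x w_t^s(x)\bigl(\hat\ell_t^s(x)\bigr)^2 .
\]
Apply this with $s=B,\ c=b$ and with $s=A,\ c=a$, and add the two inequalities. The first-order terms sum to $Z_t-(\hat\ell_t^B(b)+\hat\ell_t^A(a))$, the second-order terms sum to $Y_t$, and the two logarithmic terms give $2\log(K+1)/\eta$, which is the claim. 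The argument is deterministic and holds pathwise for any realization, and hence also against an adaptive adversary.
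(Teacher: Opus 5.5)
Your proof is correct and follows the same route as the paper: the exponential-weights potential argument, using $e^{-y}\le 1-y+y^2/2$ (valid since $\hat\ell_t^s\ge 0$) and $\log(1+z)\le z$, then telescoping and lower-bounding the final potential by the comparator's weight, with $\sum_{y\in\gG_s}W_1^s(y)=K+1$ for each instance. The only difference is cosmetic: the paper sums over $s\in\{A,B\}$ before telescoping, whereas you derive the per-instance bound first and add the two afterwards.
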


\begin{proof}
    The proof follows by manipulation of the weights $W^s_t$.
    Fix a time step $t\leq T$ and let $s \in \{A,B\}$.  
    First, consider that:
    \[\frac{ W_{t+1}^s(x) }{\sum_{y \in \mathcal G_s} W^s_t(y)}= \frac{W_{t}^s(x) }{\sum_{y \in \mathcal G_s} W^s_t(y)} \exp(-\eta \hat \ell_t^s(x))=w^s_t(x) \exp(-\eta \hat \ell_t^s(x)).\]
    Therefore, if we sum the equality above over the grid points $x \in \gG_s$, we have:
    \begin{align*}
    \frac{ \sum_{x \in \mathcal G_s} W^s_{t+1}(x) }{\sum_{y \in \mathcal G_s} W^s_t(y)} & = \sum_{x \in \mathcal G_s} w^s_t(x) \exp(-\eta \hat \ell_t^s(x)) \\& \leq \sum_{x \in \mathcal G_s} w^s_t(x)\left( 1 -\eta \hat \ell_t^s(x) + \frac{\eta^2}{2}(\hat \ell_t^s(x) )^2\right)\\
    &=  1 - \eta \sum_{x \in \mathcal G_s} w^s_t(x)\hat \ell_t^s(x)  + \frac{\eta^2}{2}\sum_{x \in \mathcal G_s} w^s_t(x)(\hat \ell_t^s(x) )^2,
    \end{align*}
    where we used that $\exp(-x) \leq 1 - x + \frac{x^2}{2}$.
    After taking the logarithm on both sides and employing that $\log(1+x) \leq x$:
    \[
    \log\left( \frac{ \sum_{x \in \mathcal G_s} W^s_{t+1}(x)}{\sum_{y \in \mathcal G_s} W^s_t(y)}\right) \leq  - \eta \sum_{x \in \mathcal G_s} w^s_t(x)\hat \ell_t^s(x)  + \frac{\eta^2}{2}\sum_{x \in \mathcal G_s} w^s_t(x)(\hat \ell_t^s(x) )^2, 
    \]
    and by summing over the time steps $t= 1, \dots, T$ and $s\in \{A,B\}$, and telescoping, we deduce:
    \[
        \log\left(\sum_{y \in \mathcal G_A} W^A_{T+1}(y) \right) + \log\left(\sum_{y \in \mathcal G_B} W^B_{T+1}(y) \right) - 2\log(K+1) \leq \sum_{t=1}^T \left(-\eta Z_t + \frac{\eta^2}{2}Y_t \right). 
    \]
    where we used $\sum_{y \in \mathcal G_s} W^s_1(y)=K+1,\,\, s \in \{A,B\}$ and the definitions of $Z_t$ and $Y_t$.
    We conclude the argument observing that, fixing an $s\in \{A,B\}$ and $x \in \mathcal G_s$, $\sum_{y \in \mathcal G_s} W^s_{T+1}(y) \geq W_{T+1}^s(x) = \exp\left( -\eta \sum_{t=1}^T \hat \ell^s_t(x)\right)$, since it represents the last update, and substituting this inequality into the last step of the previous chain of inequalities. 
\end{proof}

We now provide a high probability bound on the difference between the estimated losses $\hat \ell$ and the real $\ell$. 

\begin{lemma}\label{lemma:HP_diffloss}
Consider the realized mean loss a time $t$ and point $x \in \gG_B$, defined as $\bar \ell_t^B(x) \coloneqq \frac{1 - (M_t -x)F(x)}{2}$ and for $y \in \gG_A$ as $\bar \ell_t^A(y) \coloneqq \frac{1 - (y - M_t)(1-F(y))}{2}$.

    For any $x \in \gG_s$, with $s \in \{A,B\}$ and for all $u > 0$ it holds that:
    \[
    \mathbb P\left( \sum_{t=1}^T \left(  \hat \ell_t^s(x) - \bar \ell_t^s(x) \right) \geq \frac{u}{\beta}\right) \leq \exp(-u).
    \]
    Hence, setting $\delta\in (0,1)$, with probability $1-\delta/3$ we have that for any $z \in \mathcal{G}_s$:
    \[
    \sum_{t=1}^T  \left(\hat \ell_t^s(x) - \bar \ell_t^s(x) \right) \leq \frac{1}{\beta}\log\left( \frac{6(K+1)}{\delta}\right).
    \]
\end{lemma}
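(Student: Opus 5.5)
This is the standard implicit-exploration (IX) concentration argument of \citet{neu}. We apply it to the estimator in Equation~\eqref{def:ellcap_int} and use a Chernoff/supermartingale bound. Fix $s\in\{A,B\}$ and $x\in\gG_s$, and let $\mathbb E_{t-1}[\cdot]=\mathbb E[\cdot\mid\gH_{t-1}]$.

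The first step is to compute the conditional mean of the estimator. The sampling distributions $w_t^A,w_t^B$, and hence $Q_t(x)$, are $\gH_{t-1}$-measurable. The adversary chooses $M_t$ adaptively, but before $V_t$ is drawn. The observation event $O_t(x)$ depends only on $(X_t,Y_t)$, which are independent of $V_t$ given $\gH_{t-1}$. By Lemma~\ref{lemma:partfeed_reconstruction}, $\ell_t^s(x)$ is known on $O_t(x)$. Taking the expectation over $V_t$ of $\ell_t^s(x)$ gives exactly $\bar\ell_t^s(x)$. Hence
\[
\mathbb E_{t-1}\bigl[\hat\ell_t^s(x)\bigr]=\frac{Q_t(x)\,\bar\ell_t^s(x)}{Q_t(x)+\beta}\le \bar\ell_t^s(x).
\]
Here we need $\ell_t^s\ge 0$. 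This holds because $u_t^s\in[-1,1]$, so $\ell_t^s\in[0,1]$. The same holds for $\bar\ell_t^s$.

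The second step is the key IX trick. Since $\ell_t^s(x)\in[0,1]$ and $O_t(x)\in\{0,1\}$, we can write
\[
\hat\ell_t^s(x)=\frac{O_t(x)\ell_t^s(x)}{Q_t(x)+\beta}\le \frac{1}{2\beta}\cdot\frac{2\beta\, O_t(x)\ell_t^s(x)/Q_t(x)}{1+\beta\, O_t(x)\ell_t^s(x)/Q_t(x)}\le\frac{1}{2\beta}\log\Bigl(1+2\beta\,\tilde\ell_t\Bigr),
\]
where $\tilde\ell_t:=O_t(x)\ell_t^s(x)/Q_t(x)$ is the unbiased importance-weighted estimate. This uses $z/(1+z/2)\le\log(1+z)$ for $z\ge0$, together with $O_t\ell_t\le 1$ to replace $\beta\tilde\ell_t$ in the denominator by the appropriate quantity. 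I will match the constants as in Lemma~1 of \citet{neu}: the standard form is $\frac{\ell}{p+\gamma}\le\frac{1}{2\gamma}\log(1+2\gamma\ell/p)$ for $\ell\in[0,1]$. Consequently,
\[
\mathbb E_{t-1}\bigl[\exp(2\beta\hat\ell_t^s(x))\bigr]\le \mathbb E_{t-1}[1+2\beta\tilde\ell_t]=1+2\beta\,\mathbb E_{t-1}[\ell_t^s(x)]\le\exp(2\beta\bar\ell_t^s(x)).
\]
Here we used $\mathbb E_{t-1}[\tilde\ell_t]=\bar\ell_t^s(x)$, which follows from the independence argument of the first step. This is the step I expect to be delicate. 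Adaptivity of $M_t$ must be handled, which we do by conditioning on $\gH_{t-1}$ and $M_t$. We must also check that $\bar\ell_t^s$ is the correct conditional mean even when $x$ hits atoms of $F$. Lemma~\ref{lemma:partfeed_reconstruction} guarantees exact reconstruction, so atoms cause no bias.

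The third step is the supermartingale argument. From the second step, $S_n:=\exp\bigl(2\beta\sum_{t\le n}(\hat\ell_t^s(x)-\bar\ell_t^s(x))\bigr)$ is a nonnegative supermartingale with $\mathbb E[S_T]\le1$. By Markov's inequality,
\[
\mathbb P\Bigl(\textstyle\sum_t(\hat\ell_t^s(x)-\bar\ell_t^s(x))\ge \frac{u}{2\beta}\Bigr)\le e^{-u}.
\]
This is at least as strong as the stated bound with threshold $u/\beta$. Finally, we set $u=\log(6(K+1)/\delta)$ and take a union bound over the at most $2(K+1)$ pairs $(s,x)$. The total failure probability is then $2(K+1)\cdot\delta/(6(K+1))=\delta/3$, which gives the second claim.
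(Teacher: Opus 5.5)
Your proof is correct, but the key moment-generating-function step takes a different route from the paper's.

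\textbf{The paper's route.} It never introduces the unbiased importance-weighted estimate. It uses that, given $\gH_{t-1}$ and $\ell_t^s(x)$, the indicator $O_t(x)$ is Bernoulli with parameter $Q_t(x)$. It then computes exactly
\[
\mathbb E\bigl[\exp(\beta\hat\ell_t^s(x))\mid\gH_{t-1},\ell_t^s(x)\bigr]=1-Q_t(x)+Q_t(x)\exp\bigl(\beta\ell_t^s(x)/(Q_t(x)+\beta)\bigr),
\]
and bounds this by $1+\beta\ell_t^s(x)$ using $e^z\le 1+z/(1-z)$ and $\ell_t^s(x)\le 1$. This gives exponent $\beta$, hence exactly the stated threshold $u/\beta$.

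\textbf{Your route.} You use Neu's sample-path inequality $\hat\ell_t^s(x)\le\frac{1}{2\beta}\log\bigl(1+2\beta\tilde\ell_t\bigr)$. The only probabilistic input is then $\mathbb E_{t-1}[\tilde\ell_t]=\bar\ell_t^s(x)$.

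\textbf{What each buys.} Yours gives exponent $2\beta$, so the threshold is $u/(2\beta)$ and the final bound is a factor $2$ better than stated. It would also extend verbatim to predictable weighted sums, as in Neu's Lemma~1. The paper's argument is slightly more self-contained and avoids dividing by $Q_t(x)$. That division is harmless here: the exponential weights are strictly positive and $1\in\gG_B$, $0\in\gG_A$, so $Q_t(x)\ge w_t^A(0)\,w_t^B(1)>0$.

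Both arguments rely on the same implicit assumption: $M_t$ is chosen independently of the current draws $(X_t,Y_t,V_t)$ given the past. You make this explicit by conditioning on $(\gH_{t-1},M_t)$, which is a welcome clarification.
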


\begin{proof}
    Let $x \in \gG_s$ and fix a time step $t$.
    We observe that $O_t(x)$ is distributed as a Bernoulli with parameter $Q_t(x)$, given the history $\gH_{t-1}$ and loss $\ell^s_t(x)$. 
    By taking the exponential of the estimator $\hat \ell_t^s(x)$, 
    we can therefore bound:
    \begin{align*}
            \mathbb E_{O_t(x)} \left[\exp \left(\frac{\beta O_t(x)\ell^s_t(x) }{Q_t(x)+ \beta}\right) \mid \mathcal H_{t-1}, \ell^s_t(x) \right] &= 1-Q_t(x)+ Q_t(x) \exp \left(\frac{\beta\, \ell^s_t(x) }{Q_t(x) + \beta}\right)\\
            &\leq 1 + \frac{\beta \,Q_t(x)\, \ell^s_t(x)}{Q_t(x) + \beta - \beta \ell^s_t(x)} \\&\leq 1 + \beta \,\ell^s_t(x). 
    \end{align*}
Where we used that, for $x \in [0,1)$, $\exp(x) \leq 1 + \frac{x}{1-x}$, then that $\frac{,Q_t(x)}{Q_t(x) + \beta - \beta \ell^s_t(x)} \leq 1$. 

By the tower rule on $\ell_t^s(z)$, we can rewrite the expected value $\mathbb E \left[ \exp \left(\hat \ell^s_t(x)\right)\mid \mathcal H_{t-1} \right]$:
\begin{align*}
    \mathbb E \left[\mathbb E_{O_t(x)} \left[\exp \left(\beta \hat \ell_t^s(x)\right) \mid \mathcal H_{t-1}, \ell_t^s(x) \right] \mid \mathcal H_{t-1}\right] &\leq 1+  \beta \, \bar \ell_t^s(x) \\
    &\leq \exp \left(\beta\,  \bar \ell_t^s(x) \, \right),
\end{align*}
where the last inequality follows by $1+x \leq \exp(x)$.

Hence, the process $\exp\left(\beta \sum_{\tau=1}^t \left( \hat \ell_\tau^s(x)- \bar\ell_\tau^s(x) \right)\right)$ is a nonnegative supermartingale that has an initial value of 1. 

The first part of the thesis is deduced by applying Markov's inequality to the difference of the sums $\sum_{t=1}^T \left( \hat \ell_t^s(x)- \bar \ell_t^s(x)\right)$ and applying the deduced inequality in terms of the exponential.
That is, 
\begin{align*}
    &\mathbb P\left(\sum_{t=1}^T \left( \hat \ell_t^s(x)-\bar\ell_t^s(x)\right) \geq \frac{u}{\beta} \right)= \\
    &= \mathbb P\left( \exp \left( \beta \,\sum_{t=1}^T \left( \hat \ell_t^s(x)- \bar\ell_t^s(x) \right)\right) \geq \exp(u) \right) \leq \frac{1}{\exp(u)}.
\end{align*}
The second part of the claim is obtained building on the first, by a union bound on the $2(K+1)$ total grid points and where we select $u=\log(\frac{6(K+1)}{\delta})$.
\end{proof}
\begin{lemma}\label{lemma:aggr_auxvariables}
    Consider $\eta \in (0,\beta]$ and recall the definition of $Z_t$ and $Y_t$ given in Lemma~\ref{lemma:pre_expregret}.
    Then, it holds that $0 \leq Z_t - \frac{\eta}{2}Y_t \leq D,$ and $\mathbb E[Z_t - \frac{\eta}{2}Y_t \mid \mathcal H_{t-1}] \leq 2$.
    Moreover:
    \begin{align*}
         \sum_{x \in \gG_s,s\in \{ A,B\}} w^s_t(x)\, \bar\ell_t^s(x)- \mathbb E\left[Z_t - \frac{\eta}{2}Y_t \mid \mathcal H_{t-1}\right] \leq \left( \beta + \frac{\eta}{2}\right)D.
    \end{align*}
Furthermore, for any $u>0$:
\[
\mathbb P\left( \sum_{t=1}^T \left( \mathbb E\left[Z_t - \frac{\eta}{2}Y_t \mid \mathcal H_{t-1} \right] - \left( Z_t - \frac{\eta}{2}Y_t\right)\right) > 2 \sqrt{DTu}\right) \leq \exp(-u),
\]
and for every $x \in \mathcal G_B$ and $y \in \mathcal G_A$ it holds that:
\[
\sum_{t=1}^T \left(Z_t - \frac{\eta}{2}Y_t \right)- \sum_{t=1}^T \left(\hat \ell_t^B(x)+\hat \ell_t^A(y) \right) \leq \frac{2\log(K+1)}{\eta}.
\]
\end{lemma}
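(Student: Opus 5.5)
The plan is to establish the claims in the order stated. Each one uses only the definition of $\hat\ell^s_t$ in Equation~\eqref{def:ellcap_int}, the bound $\sum w^s_t(x)/(Q_t(x)+\beta)\le D$ from Lemma~\ref{lemma:bound_logweights}, and Lemma~\ref{lemma:pre_expregret}. Throughout, I use that $\ell^s_t\in[0,1]$, so that $0\le \hat\ell^s_t(x)\le 1/(Q_t(x)+\beta)\le 1/\beta$.

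\emph{Deterministic bounds.} Since $\eta\le\beta$, we have $\eta\hat\ell^s_t(x)\le 1$. Each summand of $Z_t-\frac{\eta}{2}Y_t$ can then be written as $w^s_t(x)\hat\ell^s_t(x)\bigl(1-\frac{\eta}{2}\hat\ell^s_t(x)\bigr)\ge \frac12 w^s_t(x)\hat\ell^s_t(x)\ge 0$. For the upper bound, $Z_t-\frac{\eta}{2}Y_t\le Z_t\le\sum w^s_t(x)/(Q_t(x)+\beta)\le D$ by Lemma~\ref{lemma:bound_logweights}.

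\emph{Conditional expectations.} Conditionally on $\gH_{t-1}$, the indicator $O_t(x)$ depends only on $(X_t,Y_t)$, while $\ell^s_t(x)$ depends only on $V_t$ (given $M_t$). These are independent, so $\mathbb E[O_t(x)\ell^s_t(x)\mid\gH_{t-1}]=Q_t(x)\bar\ell^s_t(x)$. Hence
\[
\mathbb E[Z_t\mid\gH_{t-1}]=\sum w^s_t(x)\frac{Q_t(x)}{Q_t(x)+\beta}\bar\ell^s_t(x)\le \sum_{s}\sum_x w^s_t(x)\le 2,
\]
and the same bound holds for $Z_t-\frac{\eta}{2}Y_t$ since $Y_t\ge0$. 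For the bias term, I will write $\sum w^s_t\bar\ell^s_t-\mathbb E[Z_t-\frac{\eta}{2}Y_t\mid\gH_{t-1}]$ as the sum of two pieces. The first is $\sum w^s_t(x)\bar\ell^s_t(x)\frac{\beta}{Q_t(x)+\beta}$, which is at most $\beta D$. The second is $\frac{\eta}{2}\mathbb E[Y_t\mid\gH_{t-1}]$. Using $O_t(x)(\ell^s_t(x))^2\le O_t(x)$, we get $\mathbb E[Y_t\mid\gH_{t-1}]\le\sum w^s_t(x)\frac{Q_t(x)}{(Q_t(x)+\beta)^2}\le \sum \frac{w^s_t(x)}{Q_t(x)+\beta}\le D$. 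Together these give $(\beta+\frac{\eta}{2})D$.

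\emph{Concentration, the main step.} Let $\xi_t:=Z_t-\frac{\eta}{2}Y_t\in[0,D]$. The difficulty is obtaining $2\sqrt{DTu}$ with no additive range term. Azuma--Hoeffding would only give $D\sqrt{2Tu}$, and Freedman would add a term of order $Du$. I will instead exploit that we only need the \emph{lower} tail of a nonnegative variable. For $\lambda>0$, the inequality $e^{-y}\le 1-y+y^2/2$ for $y\ge0$, together with $1+z\le e^z$, gives
\[
\mathbb E\bigl[e^{-\lambda\xi_t}\mid\gH_{t-1}\bigr]\le \exp\Bigl(-\lambda\mathbb E[\xi_t\mid\gH_{t-1}]+\tfrac{\lambda^2}{2}\mathbb E[\xi_t^2\mid\gH_{t-1}]\Bigr).
\]
Then $\mathbb E[\xi_t^2\mid\gH_{t-1}]\le D\,\mathbb E[\xi_t\mid\gH_{t-1}]\le 2D$. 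Consequently $\exp\bigl(\lambda\sum_{\tau\le t}(\mathbb E[\xi_\tau\mid\gH_{\tau-1}]-\xi_\tau)-\lambda^2 D t\bigr)$ is a nonnegative supermartingale with initial value $1$. Markov's inequality then yields, with probability at least $1-e^{-u}$,
\[
\sum_t\bigl(\mathbb E[\xi_t\mid\gH_{t-1}]-\xi_t\bigr)\le \frac{u}{\lambda}+\lambda DT .
\]
Choosing $\lambda=\sqrt{u/(DT)}$ gives the bound $2\sqrt{DTu}$.

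\emph{Final inequality.} This is a rearrangement of Lemma~\ref{lemma:pre_expregret}. Moving $\frac{\eta}{2}\sum_t Y_t$ to the left-hand side gives $\sum_t(Z_t-\frac{\eta}{2}Y_t)-\sum_t(\hat\ell^B_t(x)+\hat\ell^A_t(y))\le \frac{2\log(K+1)}{\eta}$ for every $x\in\gG_B$ and $y\in\gG_A$.
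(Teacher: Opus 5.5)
Your proposal is correct and follows essentially the same route as the paper's proof: the pointwise bound $0\le\hat\ell^s_t(x)\bigl(1-\frac{\eta}{2}\hat\ell^s_t(x)\bigr)\le\hat\ell^s_t(x)$ together with Lemma~\ref{lemma:bound_logweights}; the identity $\mathbb E[\hat\ell^s_t(x)\mid\gH_{t-1}]=\frac{Q_t(x)}{Q_t(x)+\beta}\bar\ell^s_t(x)$ and $\mathbb E[Y_t\mid\gH_{t-1}]\le D$ for the bias; the one-sided exponential supermartingale built from $\xi_t\in[0,D]$ and $\mathbb E[\xi_t^2\mid\gH_{t-1}]\le 2D$ with $\lambda=\sqrt{u/(DT)}$; and a rearrangement of Lemma~\ref{lemma:pre_expregret}. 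As in the paper, the step $\mathbb E[O_t(x)\ell^s_t(x)\mid\gH_{t-1}]=Q_t(x)\bar\ell^s_t(x)$ implicitly assumes that $M_t$ is determined by the history and independent of the current draws $(X_t,Y_t)$, which is the standard reading of the adaptive adversary here.
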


\begin{proof}

    We first notice that:
    $$
     \mathbb E[ \bar \ell_t^A(A_t) + \bar \ell_t^B(B_t) \mid \mathcal H_{t-1}] \leq \mathbb E[ \bar \ell_t^A(Y_t) + \bar\ell_t^B(X_t) \mid \mathcal H_{t-1}]=\sum_{x \in \mathcal G_s,\, s\in \{A,B \}} w^s_t(x) \bar\ell_t^s(x).
    $$ 
    The inequality derives from the fact that we might swap the prices $(Y_t,X_t)$ initially selected by Algorithm~\ref{alg:MM_partialfeed_exp3} using Lemma~\ref{lemma:order_the_game}, for which we weakly decrease the loss if in the step $t$ we had swapped the prices. 

    We consider the expected value for the losses estimates. 
    It holds, by definition:
    \begin{align*}
        \mathbb E[ \hat\ell_t^s(x)\mid \mathcal H_{t-1}] &= \frac{Q_t(x)}{Q_t(x) + \beta}\bar\ell_t^s(x), \\
        \mathbb E[ (\hat\ell_t^s(x))^2\mid \mathcal H_{t-1}] &= \frac{Q_t(x)}{(Q_t(x) + \beta)^2} \mathbb E[ \ell_t^s(x)^2 \mid \mathcal{H}_{t-1}] \leq \frac{1}{Q_t(x) + \beta}.
    \end{align*}
    Thus, follows that $\sum_{x \in \gG_s,s\in \{ A,B\}} w^s_t(x)\bar\ell_t^s(x) - \mathbb E[ Z_t \mid \mathcal H_{t-1}] \leq \beta D$ and $\mathbb E[ Y_t \mid \mathcal H_{t-1}]\leq D$, from Lemma~\ref{lemma:bound_logweights}.
    
    Notice that, for any time step $t$ and $x \in \gG_s$, $\hat \ell_t^s(x) \in [0,\frac{1}{\beta}]$: so
    $0 \leq  \hat \ell_t^s(x)\left(1-\frac{\eta}{2}\hat \ell_t^s(x)\right) \leq \hat \ell_t^s(x)$.
    Therefore, by recalling Lemma~\ref{lemma:bound_logweights}, we have that $0 \leq Z_t - \frac{\eta}{2}Y_t  \leq Z_t \leq D$. 
    By taking the expectation: 
   $$
   \mathbb E\left[Z_t - \frac{\eta}{2}Y_t\mid \mathcal H_{t-1}\right] \leq  \mathbb E\left[Z_t \mid \mathcal H_{t-1}\right] \leq \sum_{x \in \gG_s,s\in \{ A,B\}} w^s_t(x) \bar\ell_t^s(x) \leq 2.
   $$
   
    We deduce the second claim from Lemma~\ref{lemma:pre_expregret}. 
    We also notice that
    $\mathbb E[\left(Z_t - \frac{\eta}{2}Y_t \right)^2 \mid \mathcal H_{t-1}] \leq D \mathbb E[Z_t - \frac{\eta}{2}Y_t\mid \mathcal H_{t-1}] \leq 2D$.

    Therefore, $\forall \lambda \geq 0$ and using $\exp(-x) \leq 1 -x+ \frac{x^2}{2}$ and $1+x \leq \exp(x)$, we have that:
    \begin{align*}
            &\mathbb E\left[\exp \left( \lambda \left( \mathbb E \left[Z_t - \frac{\eta}{2}Y_t \mid \mathcal H_{t-1} \right] - \left(Z_t - \frac{\eta}{2}Y_t\right) \right)\right) \mid  \mathcal H_{t-1}\right] \\
            &\leq  \exp\left(\lambda \, \mathbb E\left[Z_t - \frac{\eta}{2}Y_t \mid \mathcal H_{t-1}\right]\right) \left(1- \lambda \mathbb E\left[Z_t - \frac{\eta}{2}Y_t \mid \mathcal H_{t-1} \right] + \frac{\lambda^2}{2} \mathbb E\left[\left(Z_t - \frac{\eta}{2}Y_t \right)^2 \mid \mathcal H_{t-1}\right] \right)\\
            & \leq \exp\left(\frac{\lambda^2}{2} \mathbb E\left[\left(Z_t - \frac{\eta}{2}Y_t \right)^2 \mid \mathcal H_{t-1} \right]\right) \\&\leq \exp(\lambda^2 D). 
    \end{align*}
    By summing over the $t$, as the sequence is a supermartingale with expectation bounded by 1 and applying Markov's inequality with $\lambda=\sqrt{u/TD}$ we deduce the fourth claim, while the last claim is a reordering of Lemma~\ref{lemma:pre_expregret}.
\end{proof}

We finally prove the full Theorem~\ref{th:partialregrettheorem} statement, using the results from the previous Lemma.  
\partialregrettheorem*

\begin{proof}
Recall $Z_t,Y_t$ as defined in Lemma~\ref{lemma:pre_expregret} and the observations regarding their conditional expected value in Lemma~\ref{lemma:aggr_auxvariables}. 
 We first observe that by conditional Hoeffding's bound for all time steps $t$:
    \[
        \mathbb E \left[\exp\left(\lambda( \bar\ell_t^A(Y_t) + \bar\ell_t^B(X_t) - \sum_{x \in \mathcal G_s,\, s\in \{A,B \}} w^s_t(x) \bar\ell_t^s(x)\right) \mid \mathcal H_{t-1} \right] \leq \exp(\lambda^2/2).
    \]
    Then with a union bound and Markov inequality:
    \[
    \mathbb P\left(\sum_{t=1}^T \left( \bar\ell_t^A(Y_t) + \bar\ell_t^B(X_t)-\sum_{x \in \mathcal G_s,\, s\in \{A,B \}} w^s_t(x) \bar\ell_t^s(x) \right) \geq \sqrt{2Tu}\right) \leq \exp(-u).
    \]
    
    To get the bound, we use the results in the previous Lemma~\ref{lemma:pre_expregret}, Lemma~\ref{lemma:HP_diffloss} and Lemma~\ref{lemma:aggr_auxvariables} by adding and subtracting the terms defined there.

    We consider as a comparator action $(a,b) \in \gG_A \times \gG_B$. 
    This is justified as we notice that a similar construction to the one in Lemma~\ref{lemma:pre_regretOMDreduction} repeated in the case of $\gG_A \times \gG_B$ establishes that we can reduce the problem to our discrete set at the total cost of $\frac{5T}{K}$, accounting for the additional constants in the loss. 

    In particular fix any pair $(a,b) \in \gG_A \times \gG_B$: with probability at least $1-\delta$, which derives from summing up the failure probabilities from the previous lemmas:
    \begin{align*}
        \Bigg[\sum_{t=1}^T &\Bigg(\bar\ell_t^A(A_t) +\bar \ell_t^B(B_t)- \bar\ell_t^A(a) - \bar\ell^B_t(b) \Bigg) \Bigg] 
        \\ 
       &\leq  \sum_{t=1}^T \left( \bar\ell_t^A(Y_t) + \bar\ell_t^B(X_t)- \sum_{x \in \mathcal G_s,\, s\in \{A,B \}} w^s_t(x) \bar\ell_t^s(x) \right)
       \\
      & \mkern20mu+  \sum_{t=1}^T \left(\sum_{x \in \gG_s,s \in \{A,B \}} w^s_t(x) \bar\ell_t^s(x) - \mathbb E\left[Z_t - \frac{\eta}{2}Y_t \mid \mathcal H_{t-1}\right]\right) 
      \\
      & \mkern20mu+ \sum_{t=1}^T \left(\mathbb E\left[Z_t - \frac{\eta}{2}Y_t \mid \mathcal H_{t-1}\right]- Z_t + \frac{\eta}{2}Y_t\right)
      \\
      & \mkern20mu+  \sum_{t=1}^T \left(Z_t - \frac{\eta}{2}Y_t -\hat \ell_t^A(a) - \hat \ell_t^B(b) \right) + \sum_{t=1}^T \left(\hat \ell_t^A(a) + \hat \ell_t^B(b) - \bar\ell_t^A(a) - \bar \ell_t^B(b)\right) 
      \\
      & \leq  \sqrt{2T \log(3/\delta)} + \left( \beta +\frac{\eta}{2}\right) D \, T + 2\sqrt{DT\log(3/\delta)} 
      \\
       & \mkern20mu+\frac{2\log(K+1)}{\eta} + \frac{2\log(6(K+1)/\delta)}{\beta}.
    \end{align*}
    By applying this bound to the regret in the utility with factor 2 due to the loss rescaling, we obtain the general claim.
    We also used Lemma~\ref{lemma:order_the_game} in the first inequality.

    Consequently, by selecting $K= \lceil \sqrt{T} \rceil$, $\eta = \beta = \sqrt{1/T}$ we have $D \propto \log(T)$, thus we deduce the optimal rate $R_T \leq \tilde{\gO} (\sqrt{T})$.
    In particular, with probability $1-\delta$ it holds that
   $R_T = \gO\left(\sqrt{T}\log \left(\frac{T}{\delta}\right) \right)$, which concludes the proof.   
\end{proof}

\section{Proof for the impossibility result}\label{app:LB_fullfeedMM}

In this final section, we give a proof of the unlearnability result for market making, under the favorable condition of full feedback and oblivious adversary. 
\LBforMM*
\begin{table}[t]
\caption{Upper bound on $u_t(A_t,B_t)$ for each case of the random sequence}
\label{table_LBresults}
\centering
\small
{\renewcommand{\arraystretch}{1.25}
\setlength{\tabcolsep}{8pt}
\begin{tabular}{ c | c c c c | c}
\toprule
\diagbox{\textbf{$(A_t,B_t)$}}{\textbf{$(V_t,M_t)$}}
& $(L_{t-1},1)$
& $(y,1)$
& $(x,0)$
& $(U_{t-1},0)$
& \textbf{Upper bound}
\\
\midrule
$A_t \in [L_{t-1},x), B_t \in [L_{t-1},x),$
& $1-B_t$
& $A_t-1$
& $A_t$
& $A_t$
& $\frac{1}{3}$
\\
\addlinespace[3pt]
$A_t \in [x,y), B_t \in [L_{t-1},x)$
& $1-B_t$
& $A_t-1$
& $0$
& $A_t$
& $\frac{1}{4}$ 
\\
\addlinespace[3pt]
$A_t \in [y,U_{t-1}], B_t \in [L_{t-1},x)$
& $1-B_t$
& $0$
& $0$
& $A_t$
& $\frac{1}{3}$ 
\\
\addlinespace[3pt]
$A_t \in [x,y),B_t \in [x,y)$
& $1-B_t$
&  $A_t-1$
& $-B_t$
& $A_t$
& $\frac{1}{18}$ 
\\
\addlinespace[3pt]
$ A_t \in [y,U_{t-1}], B_t \in [x,y)$
& $1-B_t$
& $0$
& $-B_t$
& $A_t$
& $\frac{1}{3}$ 
\\
\addlinespace[3pt]
$A_t \in [y,U_{t-1}], B_t \in [y,U_{t-1}]$
& $1-B_t$
& $1-B_t$
& $-B_t$
& $A_t$
& $\frac{1}{3}$ 
\\
\bottomrule
\end{tabular}}
\end{table}

\begin{proof}
    To prove the statement, we rely on Yao's minimax principle \citep{4567946}: we construct a random instance of the problem and prove that any deterministic bid-ask sequence suffers linear regret with such sequence. 

    Consider the following construction for a sequence bounded in $[0,1]$ ${(L_t,U_t)}_t$. Let $D_t=U_t-L_t$ at any given time step $t$. 
    We initialize $(L_0,U_0)=(1/3,2/3)$, then for any $t=1, \dots,T$:
    \begin{align*}
        \begin{cases}
            (L_{t},U_t) = (L_{t-1}, U_{t-1}-\frac{2D_{t-1}}{3}) \qquad \text{with probability} \quad \frac{1}{2}\\
            (L_{t},U_t) = (L_{t-1} + \frac{2D_{t-1}}{3}, U_{t-1}) \qquad \text{with probability} \quad \frac{1}{2}
        \end{cases}
    \end{align*}

    Given this sequence, we construct the instance of prices $\{(V_t,M_t)\}_t$ as follows:
     \begin{align*}
        \begin{cases}
            (V_{t},M_t) = (L_t,1), \qquad \text{with probability} \quad \frac{1}{2},\\
            (V_{t},M_t) = (U_t,0), \qquad \text{with probability} \quad \frac{1}{2},
        \end{cases}
    \end{align*}
    where the sequence is drawn independently of the previous one. 

    We now consider an optimal in hindsight action for such instance of the problem. 
    We notice that there exists a $(a^*,b^*)$ where $a^*,b^* \in [L_T,U_T]$ such that the learner always wins. Defining a constant $\varepsilon \leq \frac{D_T}{2}$, we can take such action as $(a^*,b^*)=(U_T - \varepsilon, L_T)$. 
    The total expected revenue with respect to the randomization of the environment is: 
    \[
    \mathbb E \left[\sum_{t=1}^T u_t(a^*,b^*) \right] = T\frac{1-L_T + U_T - \varepsilon}{2} = T\frac{1 + D_T - \varepsilon}{2} \geq \frac{T}{2} + \frac{T \,D_T}{4}
    \]
    where we imposed $\varepsilon=\frac{D_T}{2}$ and $D_T = \frac{1}{3^{T+1}}$ and vanishes as $T$ grows.

    We now turn to the expected revenue for any deterministic sequence.
    For any time step $A_t > U_t$ or $B_t < L_t$ gains $0$ revenue for the corresponding utility, while having $A_t\leq L_t$ or $B_t \geq U_t$ doesn't improve utility compared to choosing $A_t = L_t$ and $B_t=U_t$. 
    With this observation, we can restrict to only considering the cases of bid-ask prices $A_t \in (L_{t-1},U_{t-1}]$ and $B_t \in [L_{t-1},U_{t-1})$.
    We have six cases to consider, depending on where the bid ask prices lie in the three intervals $(L_{t-1}, L_{t-1} + \frac{D_{t-1}}{3})$, $(L_{t-1} + \frac{D_{t-1}}{3}, L_{t-1} + \frac{2D_{t-1}}{3})$ and $(L_{t-1} + \frac{2D_{t-1}}{3}, U_{t-1})$ and given our general constraint $A_t \geq B_t$. 
    For simplicity, we call $x=L_{t-1} + \frac{D_{t-1}}{3}$ and $y= L_{t-1} + \frac{2D_{t-1}}{3}$ in Table~\ref{table_LBresults}.
    We proceed by computing the expected revenue given each event in the original sequence and summing up the resulting revenue for each case, thus obtaining the general expected revenue for a deterministic sequence of prices. 
    We summarize the results for each case with Table~\ref{table_LBresults}.

    Overall, the expected utility for a deterministic algorithm playing $(A_t,B_t)\in \mathcal U$ is $ \mathbb E \left[u_t(A_t,B_t) \right] \leq \frac{1}{3}$, given our construction.

    Therefore by Yao's principle that for any probabilistic algorithm $\{(A_t,B_t)\}_{t=1}^T$:
    \begin{align*}
        R_T \geq \frac{T}{2} - \frac{T}{3}=\frac{T}{6},
    \end{align*}
    which proves the statement, by Yao's principle.  
\end{proof}

\end{document}